\newsavebox{\@brx}
\newcommand{\llangle}[1][]{\savebox{\@brx}{\(\m@th{#1\langle}\)}%
  \mathopen{\copy\@brx\mkern2mu\kern-0.9\wd\@brx\usebox{\@brx}}}
\newcommand{\rrangle}[1][]{\savebox{\@brx}{\(\m@th{#1\rangle}\)}%
  \mathclose{\copy\@brx\mkern2mu\kern-0.9\wd\@brx\usebox{\@brx}}}
  \newcommand{\nxt}{\bigcirc}
    \newcommand{\until}{\, \Large U \,}
  \newcommand{\ltl}[0]{\text{LTL}}
  \newcommand{\ltln}[0]{\text{LTL}_{\setminus \nxt }}
  \newcommand{\ctln}[0]{\text{CTL}^*_{\setminus \nxt }}
  \definecolor{lpurple}{RGB}{251,181,255}
  \definecolor{lred}{RGB}{255,170,170}
  \definecolor{lcyan}{RGB}{223,255,252}
  \definecolor{black_opac}{RGB}{0,0,0}
  \tikzset{%
  prefix node name/.code={%
    \tikzset{%
      name/.code={\edef\tikz@fig@name{#1 ##1}}
    }%
  }%
}
\definecolor{color1}{rgb}{0.1,0.498039215686275,0.9549019607843137}
\definecolor{alizarin}{rgb}{0.82, 0.1, 0.26}
\definecolor{antiquewhite}{rgb}{0.98, 0.92, 0.84}
\definecolor{azure}{rgb}{0.94, 1.0, 1.0}
\definecolor{offwhite}{rgb}{0.98, 0.97, 0.97}
\definecolor{pigment}{rgb}{0.2, 0.2, 0.6}
\begin{document}

\title{Bisimulation Learning}
\author{Alessandro Abate\inst{1}
 \and
 Mirco Giacobbe \inst{2}
\and
 Yannik Schnitzer\inst{1}
 }
 \institute{University of Oxford, UK\\
 \email{\{alessandro.abate,yannik.schnitzer\}@cs.ox.ac.uk}\\
 \and
    University of Birmingham, UK\\
    \email{m.giacobbe@bham.ac.uk}}
\maketitle              

\begin{abstract}
We introduce a data-driven approach to computing finite bisimulations for state transition systems with very large, possibly infinite state space. Our novel technique computes stutter-insensitive bisimulations of deterministic systems, which we characterize as the problem of learning a state classifier together with a ranking function for each class. Our procedure learns a candidate state classifier and candidate ranking functions from a finite dataset of sample states; then, it checks whether these generalise to the entire state space using satisfiability modulo theory solving. Upon the affirmative answer, the procedure concludes that the classifier constitutes a valid stutter-insensitive bisimulation of the system. Upon a negative answer, the solver produces a counterexample state for which the classifier violates the claim, adds it to the dataset, and repeats learning and checking in a counterexample-guided inductive synthesis loop until a valid bisimulation is found. 
We demonstrate on a range of benchmarks from reactive verification and software model checking that our method yields faster verification results than alternative state-of-the-art tools in practice. Our method produces succinct abstractions that enable an effective verification of linear temporal logic without next operator, and are interpretable for system diagnostics.

\keywords{Data-driven verification  \and Stutter-insensitive bisimulation \and Reactive verification \and Software model checking \and Abstraction}
\end{abstract}
\section{Introduction}
Abstraction of state transition systems is the process for which 
a system under analysis---the concrete system---is 
reduced to another system---the abstract system---that is simpler to analyze and preserves certain temporal properties of the former~\cite{DBLP:conf/popl/CousotC77,DBLP:conf/concur/Glabbeek93,DBLP:books/sp/Milner80,DBLP:conf/tcs/Park81}. 
It is a fundamental approach to state space reduction in the verification of finite-state systems and an essential element for the verification of infinite-state systems. 
Bisimulations are the abstractions that preserve linear 
and branching behaviour with respect to propositional observations, for which the model checking question 
for both linear- and branching-time logics have the same answer on the abstract and the concrete system~\cite{DBLP:journals/tcs/BrowneCG88,DBLP:journals/jacm/HennessyM85}.

Computing a bisimulation amounts to
computing an equivalence relation 
on the state space that 
is stable with respect to a notion of state change, 
and preserves propositional observations. 
An equivalence relation defines a partition 
of the concrete state space 
and induces an abstract system where every abstract state 
corresponds to an equivalence class.
The problem of computing bisimulations over an explicit representation of the state graph has been widely 
studied in the past~\cite{DBLP:journals/fac/BalcazarGS92,DBLP:journals/iandc/KanellakisS90}, since Hopcroft's graph minimisation algorithm and the
Paige-Tarjan algorithm for iterative partition refinement~\cite{HOPCROFT1971189,DBLP:journals/siamcomp/PaigeT87}. 
 Partition refinement was improved with on-the-fly partition refinement of the reachable state space 
as well as parallelisation~\cite{DBLP:journals/sttt/DijkP18,DBLP:conf/stoc/LeeY92,DBLP:conf/cav/LeeR94,DBLP:conf/facs2/0001GHHW21}. 
Yet, explicit-state algorithms fall short on systems with 
very large or infinite state space, for which one must resort to procedures that represent regions of state space 
symbolically~\cite{DBLP:conf/cav/BouajjaniFH90}. 

Partition refinement relies on computing exact pre- and post-images through the transition function of the system~\cite{DBLP:conf/cav/BouajjaniFH90,DBLP:journals/tocl/GrooteJKW17}.
This entails quantifier elimination, 
which is computationally costly. 
Counterexample-guided abstraction refinement (CEGAR) provides an approach to avoid pre- and post-image computation; it computes {\em simulations} of state transition systems incrementally, from infeasibility proofs of spurious counterexamples~\cite{DBLP:conf/cav/ClarkeGJLV00,DBLP:conf/popl/HenzingerJMM04}. 
The resulting abstract system is tight enough to verify a specific property of interest, but cannot generally provide concrete counterexamples when a property is false and, for this purpose, methods based on CEGAR are usually coupled with bounded model checking~\cite{DBLP:journals/ac/BiereCCSZ03}. Similarly, methods for temporal logic verification based on proof rules (i.e., certificates) provide sufficient conditions to verify whether a property holds but do not provide a counterexample when this is false~\cite{DBLP:journals/iandc/GrumbergFR85,DBLP:journals/apal/Vardi91,DBLP:conf/popl/CookGPRV07,DBLP:conf/hybrid/MuraliTZ24,cav24supermartingales}.
By contrast, bisimulations provide a tight abstraction where abstract counterexamples correspond to concrete counterexamples and, as such, these are directly interpretable for system debugging and diagnostics.

We present a data-driven approach to computing finite bisimulations 
from sample states and transitions of the system, which skips partition refinement entirely.
We adapt the notion of \textit{well-founded bisimulations}, where the condition of stability of the equivalence relation with respect to stuttering is characterised as the existence of ranking functions over well-founded sets~\cite{DBLP:conf/fsttcs/Namjoshi97}. 
While originally introduced solely as a proof rule, we leverage well-founded bisimulations for the first time to directly compute finite bisimulations.  
We instantiate well-founded bisimulations with ranking functions that, for every state transition to a different state in the abstract system, map states to natural numbers that decrease strictly as the system stutters. 
This characterises {\em stutter-insensitive bisimulations} for deterministic transition systems and also applies to strong bisimulations, which is the special case of our method where ranking functions are constant.

Stutter-insensitive bisimulations are stable bisimulations with respect to observation change in the system, and is closed with respect to all state transitions between these changes. A system stutters when it 
changes concrete state without changing observation~\cite{DBLP:conf/ifip/Lamport83}, 
and stutter-insensitive bisimulations abstract stuttering away. In contrast to strong bisimulations, stutter-insensitive bisimulations result in much more succinct abstractions, while being sufficiently strong to preserve the validity of any linear temporal logic specification without next operator. 
While our approach also applies to strong bisimulations, we generalise our method to stutter-insensitive bisimulations, 
because they more effectively yield finite abstractions 
on infinite-state systems in practice.

We build on the observation that a finite partition can be characterized as a state classifier mapping the (possibly infinite) state space into a finite set of classes. This reduces the problem of computing a stutter-insensitive bisimulation to training a classifier and a ranking function for each class~\cite{DBLP:conf/sas/UrbanM14,DBLP:conf/sigsoft/Nori013,DBLP:conf/sigsoft/GiacobbeKP22}. For the partition classifier, we employ binary a decision tree (BDT) with parametric linear predicates at each decision node, and we associate each leaf node with a parametric linear ranking function. This structure forms our template.

Our approach is underpinned by a learner and a verifier interacting with each other, both using a satisfiability modulo theory (SMT) solver. The learner proposes a candidate bisimulation by computing parameters of the classifier and ranking function templates to satisfy conditions over sampled transitions. The verifier then checks if these conditions hold over the \textit{entire state space}. If affirmed, the classifier induces a stutter-insensitive bisimulation. If not, the verifier provides a counterexample, a state where stutter-insensitive bisimulation conditions are violated. This counterexample is fed back to the learner, which updates the classifier and ranking functions. The process repeats in a counterexample-guided inductive synthesis (CEGIS) loop until the verifier confirms the bisimulation's validity~\cite{DBLP:conf/asplos/Solar-LezamaTBSS06}. If the template cannot fit 
the finite set of samples, for instance, due to an insufficient number of partitions, our procedure automatically enlarges the BDT with an additional layer and resumes the CEGIS loop.

We demonstrate the experimental efficacy of our approach on numerical programs and reactive software systems with integer state spaces. We consider benchmarks from reactive verification 
and software model checking, in particular discrete-time synchronisation protocols and conditional termination analysis problems. We benchmark the former set against the nuXmv model checker for reactive verification and the latter against the Ultimate and the CPAChecker tools for software verification~\cite{DBLP:conf/cav/BeyerK11,DBLP:conf/cav/CavadaCDGMMMRT14,DBLP:conf/tacas/HeizmannBDFHKNSSP23}. The results are two-fold. For the reactive verification benchmarks, our approach has faster verification times than nuXmv on systems with long stuttering intervals. For the conditional termination benchmarks, our approach is able to generate exact preconditions for which the program terminates, unlike the baselines that return negative answers when the program does not terminate for at least one input. In summary, we demonstrate that, on these problems, our approach yields both faster and more informative results than the alternative state-of-the-art tools.

We summarise our contributions in the following three points: 
    (1) we introduce the first data-driven approach to construct bisimulations, as an alternative approach to partition refinement;
    (2) we implement the theory of well-founded bisimulations which we synthesise in a CEGIS loop, as a means to compute stutter-insensitive bisimulations; 
    (3) we demonstrate the efficacy of our novel approach on reactive verification and software model checking benchmarks. 
 Our approach is fully automatic and requires no user input beyond the system itself. It produces succinct abstractions of infinite-state systems, which effectively enables their LTL (without next) verification using finite-state model checkers.

\section{Illustrative Example}

We motivate our procedure with an example from software model checking. Consider the code snippet in Figure~\ref{fig:euclid}a. The program takes two arbitrary integers as input and subtracts the smaller from the larger until the two values coincide. 
We ask the question of whether the program terminates for every initial condition, which is not straightforward to answer for this example. Given two positive inputs, the program run the Euclidean algorithm for the greatest common divisor and terminates once it is found. However, for any two unequal non-positive inputs, this implementation will never exit the loop and run forever. 

\vspace{0.2cm}
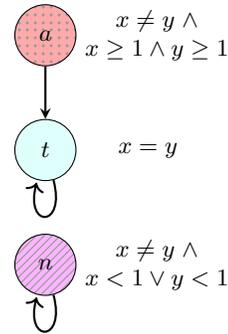
\begin{figure}
    \centering  
    \begin{tabular}{ccc}
        \begin{minipage}[b]{0.49\columnwidth}
        \centering
        \begin{minted}[mathescape]{java}
          int x = *, y = *;
          while (x != y) {
              if (x > y) 
                  x = x - y;
              else 
                  y = y - x;
          }


        \end{minted}
        \vfill
        \end{minipage}
        &
        \qquad
        &
        \begin{minipage}[b]{0.49\columnwidth}
        \centering
        \begin{tikzpicture}
          \begin{scope}[xshift = 4.5cm, prefix node name =G2, yshift = 2cm, node distance = .7cm]
        
        \node (q0) [state, initial text = {},preaction={fill, lred}, pattern = {dots},pattern color = black_opac!40!white]  {$a$};
        \node (l0) [right = 1cm and 0cm of G2 q0,align=center] {\footnotesize$x \not = y \; \wedge$\\ \footnotesize$x \geq 1 \land y \geq 1$};
        
        \node (q1) [state, below = of G2 q0,,fill = lcyan] {$t$};
        \node (l1) [right = 2cm and 0cm of G2 q1, align = center, inner sep = .55cm] {\footnotesize$x = y$};
        
        \node (q2) [state, below = of G2 q1,,preaction={fill, lpurple}, pattern = {north east lines}, pattern color = black_opac!40!white] {$n$};
        \node (l2) [right = 1cm and 0cm of G2 q2,align=center] {\footnotesize$x \not = y \; \wedge$\\ \footnotesize$x < 1 \lor y < 1$};
        
        \path [-stealth, thick]
            (G2 q0) edge node {$ $}   (G2 q1)
            (G2 q1) edge [loop below]  node {$ $}()
            (G2 q2) edge [loop below]  node {$ $}();
        \end{scope}
    \end{tikzpicture}
    \end{minipage}\\
    (a) Concrete program && 
    (b) Abstract program
    \end{tabular}
    \caption{Learned stutter-insensitive bisimulation of the Euclidean algorithm.}
    \label{fig:euclid}
\end{figure}
\vspace{0.2cm}

Our procedure solves the termination problem by iteratively learning parameters for a given state classifier template, such that its induced partition of the state space satisfies the stutter-insensitive bisimulation conditions over a finite set of sample transitions of the program. We ensure this by simultaneously computing parameters for given ranking function templates, which, together with the partition induced by the classifier, satisfy the equivalent conditions of a well-founded bisimulation. We leverage an SMT solver to check for counterexamples, i.e., states that are not equivalent to other states with the same class assigned by the classifier. These counterexample states are passed back to the learning procedure to update the classifier and the ranking function parameters until the SMT solver cannot generate a counterexample anymore and, thus, certifies that the learned classifier generalises to the entire infinite state space and induces a valid stutter-insensitive bisimulation.

Figure~\ref{fig:exProgress} illustrates the iterative update of the classifier with respect to the sampled program behaviour, given an initial partitioning of the state space into the class of \textit{terminated} states violating the loop condition \mintinline{java}{x != y} and the disjoint class of \textit{not terminated} states. Upon termination 
the learned classifier correctly separates the states into those for which both variables are positive and which will eventually reach a terminated state after stuttering for a finite number of steps and the states that infinitely stutter in the class of not-terminated states. 

\vspace{0.2cm}
\begin{figure}
    
     \centering

    \begin{tabular}{ccc}
\includegraphics[width=0.325\textwidth]{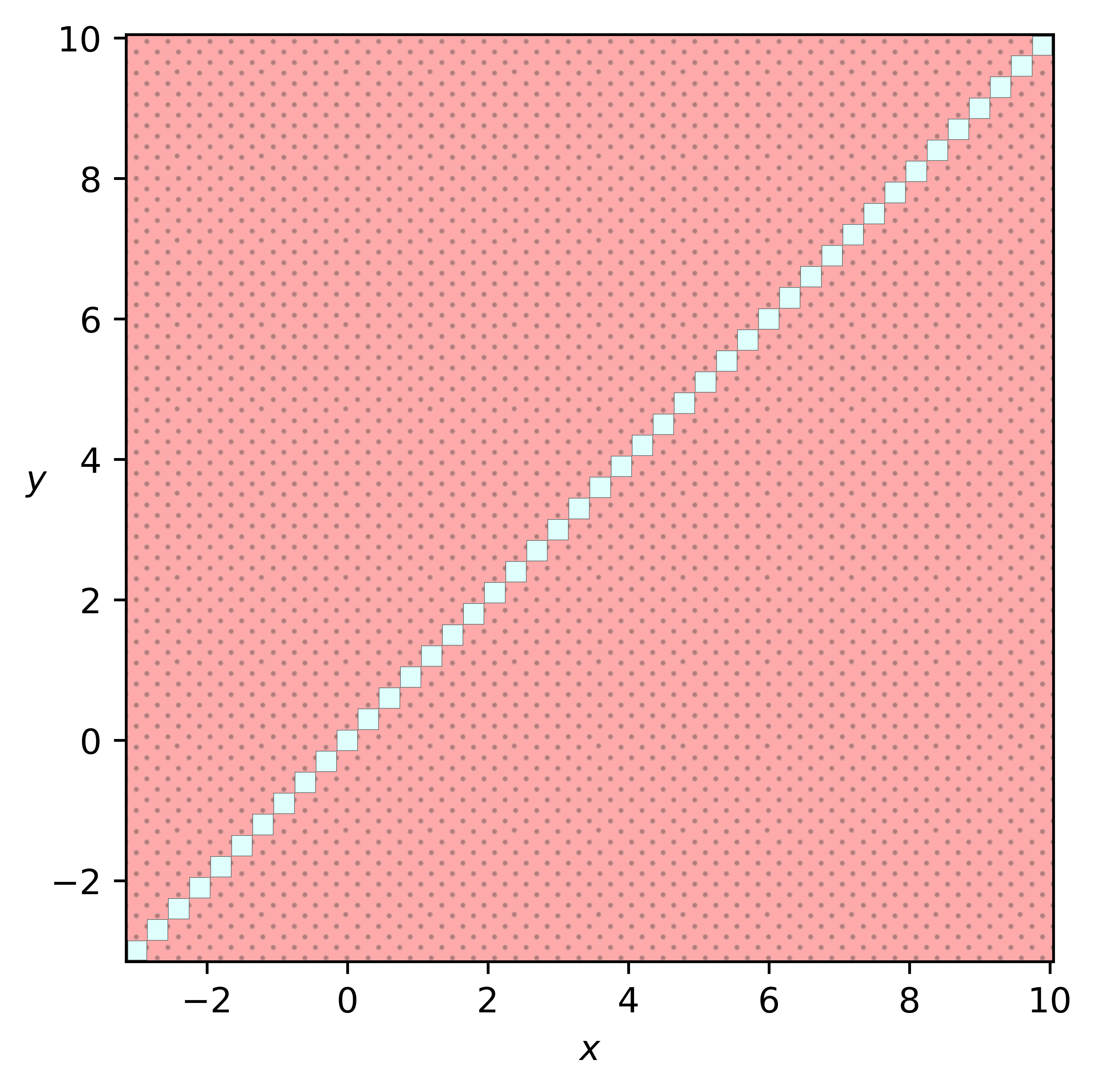}\hfill
&
\includegraphics[width=0.325\textwidth]{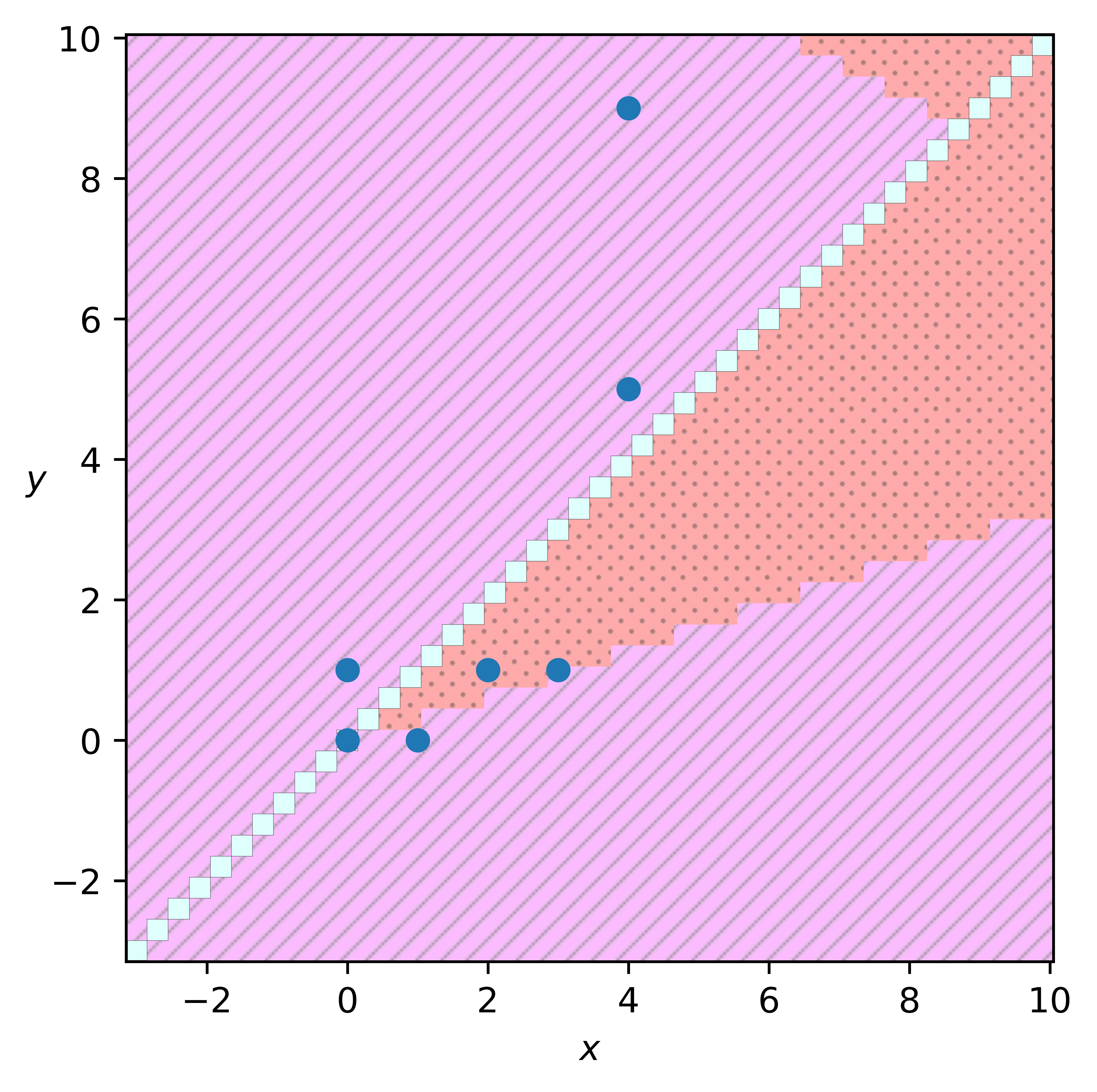}\hfill
&
\includegraphics[width=0.325\textwidth]{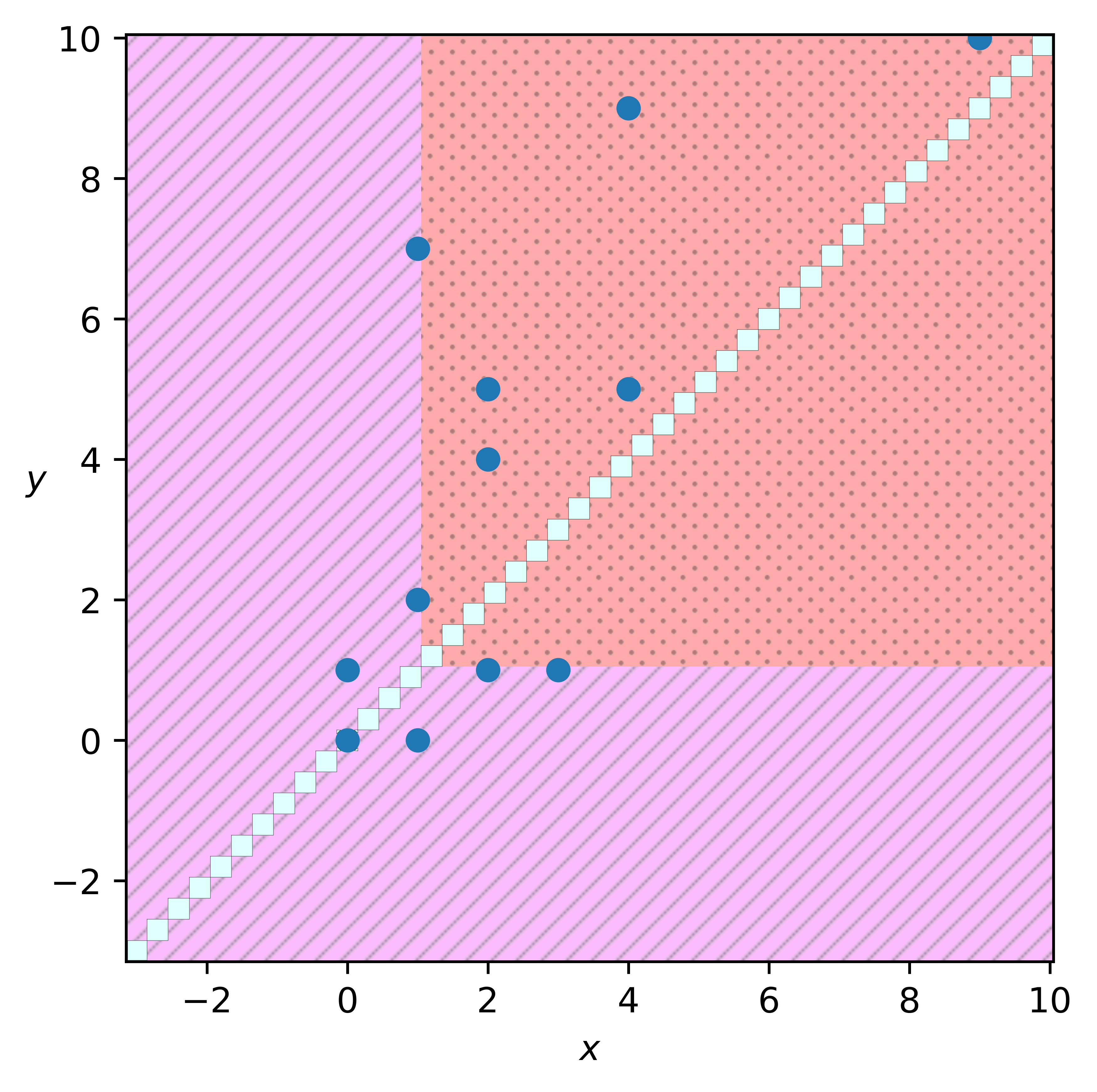}\\
(a) Initial partition &
(b) Intermediate partition & 
(c) Final partition
    \end{tabular}

        \caption{Iterative process of bisimulation learning. Starting from the initial label-preserving partition (a), our procedure generates counterexamples (blue dots) until it attains a valid stutter-insensitive bisimulation (c).}
        \label{fig:exProgress}
\end{figure}
\vspace{0.2cm}

In addition to the stutter-insensitive bisimulation, our procedure generates the corresponding abstract system by computing the behavior of the abstract states (i.e., the classes of the partition) alongside the classifier. Figure~\ref{fig:euclid}b shows the synthesized abstract system for the Euclidean algorithm, where each abstract state corresponds to an infinite subset of the concrete state space. The stutter-insensitive bisimulation ensures that the termination question has the same answer for all concrete states within the same class. A key advantage of our approach over methods providing a single counterexample is that it produces interpretable representations of the abstract system, aiding in system diagnostics. Specifically, our approach yields interpretable classifiers as binary decision trees. Figure~\ref{fig:euclid}b shows the abstract system and the automatically generated predicates defining the partition. Even for high-dimensional state spaces and complex partitions, this approach provides accessible means to interpret and diagnose the system for potential faults and undesired behavior~\cite{DBLP:conf/hybrid/AshokJJKWZ20,DBLP:conf/tacas/BrazdilCKT18}.

\section{Stutter-insensitive Bisimulations of Deterministic  Transition Systems}
We introduce the fundamental concepts underpinning our approach.

\begin{definition}[Transition Systems]
    A transition system ${\cal M}$ consists of
    \begin{itemize}
	   \item a state space $S$,
 	 \item an initial region $I \subseteq S$, and
	   \item a non-blocking transition function $T \colon S \to (2^S \setminus \emptyset)$.
\end{itemize}
We say that $\cal M$ is deterministic if $|T(s)| = 1$ for all $s \in S$. It is labelled when it additionally comprises
\begin{itemize}
    \item a set of atomic propositions $AP$ (the observables), and
    \item a labelling (or observation) function $\llangle \cdot \rrangle \colon S \to 2^{AP}$.
\end{itemize}
A trajectory of $\cal M$ is any sequence of states $\tau = s_0, s_1, s_2, \dots$ such that $s_{i+1} \in T(s_i)$ for all consecutive $s_i, s_{i+1}$ in $\tau$. We say that $\tau$ is initialised if $s_0 \in I$.
    \label{def:ts}
\end{definition}

\begin{definition}[Partitions]
     A partition on $\cal M$ is an equivalence relation $\simeq \subseteq S \times S$ on $S$,
     which defines the quotient space $S/_{\simeq}$ (i.e., the set of equivalence classes of $\simeq$) of pairwise-disjoint regions of $S$ whose union is $S$.
\end{definition}

Since we are interested in a notion of state equivalence insensitive to behaviour that does not change the observation of a state, the concept of divergence will be essential to distinguish between states that progress while not changing observation and those that do not progress at all~\cite{DBLP:books/daglib/baierkatoen,DBLP:conf/lics/Walker88}. 

\begin{definition}[Divergence Sensitivity]
    Let $\simeq$ be a partition on $\cal M$. A state $s \in S$ is $\simeq$-divergent if there exists an infinite trajectory $s_0, s_1,\dots$ such that $s_0 = s$ and $s_i \simeq s$ for all $i > 0$. Partition $\simeq$ is divergence-sensitive when $s \simeq t$ and $s$ is $\simeq$-divergent implies that $t$ is $\simeq$-divergent.
\end{definition}

A partition of the state space induces a reduced transition system --- the corresponding abstract system or \textit{quotient}.

\begin{definition}[Quotient]
     The quotient of $\cal M$ under the partition $\simeq$ is the transition system ${\cal M}/_\simeq$ with
\begin{itemize}  
    \item state space $S/_{\simeq}$,
    \item initial region $I/_\simeq$ where $R \in I/_\simeq$ iff $R \cap I \neq \emptyset$, and
\item transition function $T/_\simeq$ where 
\begin{enumerate}
    \item $R\neq Q \in T/_\simeq(R)$ iff $T(s) \in Q$ for some $s \in R$,
    \item $R \in T/_\simeq(R)$ iff some $s \in R$ is $\simeq$-divergent.
\end{enumerate}
\end{itemize}
\label{def:quotient}
\end{definition}

The quotient is the aggregation of equivalent states and their behaviours. The specifications preserved by the quotient, i.e., the statements that carry over from the abstract to the concrete system, depend on the properties of the underlying partition~\cite{DBLP:journals/siamcomp/PaigeT87}. The most important property to preserve sensible specifications is that equivalent states must have equal observations.

\begin{definition}[Label-preserving Partitions]
    A partition $\simeq$ on a labelled transition system is label-preserving when $s \simeq t$ implies $\llangle s \rrangle = \llangle t \rrangle$. The quotient ${\cal M}/_\simeq$ of a labelled transition system $\cal M$ under a label-preserving partition $\simeq$ is labelled with the extended labelling function $\llangle \cdot \rrangle \colon S \cup S/_\simeq \to 2^{AP}$ 
    where, for every region $R \in S/_\simeq$,  $\llangle R \rrangle = \llangle s \rrangle$ for any representative $s \in R$.
\end{definition}

A standard notion of state equivalence on labelled transition systems is \textit{bisimilarity}~\cite{DBLP:books/daglib/Milner89}. Bisimilarity preserves both linear- and branching-time behaviour by co-inductively requiring that every pair of related states can match each others' transitions with equivalent transitions. However, this stability with respect to stepwise behaviour often results in large quotients, thus limiting its suitability to facilitate reasoning over the system~\cite{DBLP:journals/siamcomp/PaigeT87}. Therefore, we focus on \textit{stutter-insensitive bisimulations}~\cite{DBLP:journals/tcs/BrowneCG88}. By abstracting from stepwise behaviour that does not change the observation of a state, stutter-insensitive bisimulations yield smaller quotients while preserving important specifications, as we will see in the following section.

\begin{definition}[Stutter-insensitive Bisimulation]
A label-preserving partition $\simeq$ is a stutter-insensitive bisimulation if, for all states $s, s', t \in S$ such that $s \simeq t$ and $s \not\simeq s' \in T(s)$, there exists a finite trajectory 
$t_0, t_1, \dots, t_k$ such that 
$t_0 = t$, $t_i \simeq s$ for all $i = 1, \dots k-1$, and $t_k = t'$ for some $t' \simeq s'$.
\label{def:stuttBisim}
\end{definition}

Figure~\ref{fig:stuttintuition} illustrates the stability condition of stutter-insensitive bisimulations. This condition requires that for related states, transitions to unrelated states can be matched by finite trajectories that pass through the same equivalence class.

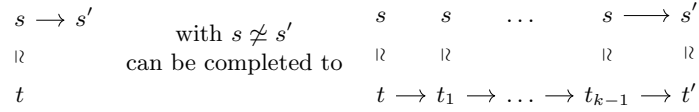
\begin{figure}
    \centering
        \begin{tikzcd}[row sep=normal, column sep = small]
     s               \rar{} \arrow[draw=none]{d}[sloped,auto=false]{\simeq}
       & s' \\
     t   
       &  
    \end{tikzcd}
    \parbox{10em}{\centering with $s \not \simeq s'$\\can be completed to}
    \begin{tikzcd}[row sep=normal, column sep = small]
     s                \arrow[draw=none]{d}[sloped,auto=false]{\simeq} & s \arrow[draw=none]{d}[sloped,auto=false]{\simeq} & \dots & s \arrow[draw=none]{d}[sloped,auto=false]{\simeq} \rar{}& s' \arrow[draw=none]{d}[sloped,auto=false]{\simeq}\\
     t  \rar{} & t_1 \rar{} & \dots  \rar{} & t_{k-1} \rar{} & t'
    \end{tikzcd}
    \caption{Trajectory-based representation of the stutter-insensitive stability condition.}
    \label{fig:stuttintuition}
    \vspace{-3mm}
\end{figure}

\begin{lemma}
Every stutter-insensitive bisimulation on any deterministic labelled transition system admits a deterministic quotient. 
\label{lem:detQuo}
\end{lemma}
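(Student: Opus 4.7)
The plan is to fix an arbitrary equivalence class $R \in S/_\simeq$ and show $|T/_\simeq(R)| = 1$, by case split on whether $R$ contains a $\simeq$-divergent state. The crucial tool throughout is that, in a deterministic system, every state has a unique outgoing trajectory, so the ``there exists a trajectory'' clauses in Definition~\ref{def:stuttBisim} and in the definition of divergence both become claims about that one trajectory.

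In the case that some $s \in R$ is $\simeq$-divergent, I first argue that every $t \in R$ is divergent. Suppose not: the unique trajectory from $t$ leaves $R$ for the first time at some step $k$, i.e., $t_0, \dots, t_{k-1} \in R$ but $t_k \notin R$. Then $t_{k-1} \simeq s$ together with $t_{k-1} \not\simeq t_k \in T(t_{k-1})$ and the stutter-insensitive bisimulation condition would produce a trajectory from $s$ that also leaves $R$---contradicting $s$'s divergence, since in a deterministic system this trajectory must coincide with the unique one already witnessing divergence. Hence every $s \in R$ satisfies $T(s) \in R$, so no class $Q \neq R$ lies in $T/_\simeq(R)$, while $R \in T/_\simeq(R)$ by the divergence clause of Definition~\ref{def:quotient}; thus $T/_\simeq(R) = \{R\}$.

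In the case that no $s \in R$ is divergent, the unique trajectory from each state in $R$ must eventually leave $R$, so some $s \in R$ satisfies $T(s) = s' \notin R$; fix the class $Q \neq R$ with $s' \in Q$. For any other $t \in R$ with $T(t) = t' \notin R$, applying the bisimulation to $s \simeq t$ and $s \not\simeq s' \in T(s)$ yields a finite trajectory $t = t_0, t_1, \dots, t_k$ with $t_i \simeq s$ for $1 \le i \le k-1$ and $t_k \simeq s'$. Determinism forces $t_1 = t'$; the case $k = 0$ would give $t \simeq s'$, contradicting $s \simeq t$ and $s \not\simeq s'$, while $k \ge 2$ would force $t' = t_1 \simeq s$ and hence $t' \in R$. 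Therefore $k = 1$ and $t' \simeq s' \in Q$. So every outgoing class of $R$ coincides with $Q$, and the divergence clause is inactive, giving $T/_\simeq(R) = \{Q\}$.

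In both cases $|T/_\simeq(R)| = 1$, so ${\cal M}/_\simeq$ is deterministic. The main subtlety is the collapse of the bisimulation's stuttering witness trajectory to a single step whenever the source and target lie in different classes; this relies on using determinism to identify $t_1$ with $T(t)$ and on carefully ruling out the boundary cases $k = 0$ and $k \ge 2$. Beyond this bookkeeping I do not foresee a deep obstacle.
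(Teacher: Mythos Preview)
Your proof is correct, and it takes a different route from the paper's. The paper argues by contradiction in one stroke: if $\mathcal{M}/_\simeq$ were nondeterministic, some class $R$ would have two distinct successors $Q, V$, witnessed by $s, t \in R$ with $T(s) \in Q$ and $T(t) \in V$; applying Definition~\ref{def:stuttBisim} to $s \simeq t$ and using determinism to identify the witnessing trajectory with the unique trajectory from $t$ immediately gives a contradiction. Your argument is instead direct and case-splits on whether $R$ contains a $\simeq$-divergent state, which means you essentially prove divergence-sensitivity (the paper's Lemma~\ref{lem:stuttdiv}) inside Case~1 before concluding $T/_\simeq(R) = \{R\}$. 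This makes your proof longer but more self-contained; in particular it treats the situation $\{R, Q\} \subseteq T/_\simeq(R)$ (self-loop from divergence plus one exit) explicitly, whereas the paper's phrasing ``pairwise distinct $R, Q, V$'' as written only covers two \emph{non-$R$} successors. Both arguments ultimately rest on the same mechanism you highlight at the end: determinism collapses the stuttering witness trajectory to the unique trajectory from the source, forcing every out-of-class successor to land in a single fixed class.
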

\begin{proof}
    Let $\cal M$ be a deterministic transition system and $\simeq$ be a stutter-insensitive bisimulation on $\cal M$. Assume $M/_{\simeq}$ is nondeterministic, this implies that there exists pairwise distinct $R, Q, V \in S/_{\simeq}$ such that $\{Q,V\} \subseteq T/_{\simeq}(R)$. It follows that there exist $s,t \in R$ with $T(s) \in Q$ and $T(t) \in V$. Since $s,t \in R$ it holds that $s \simeq t$ and as $\cal M$ is deterministic and $Q \not = V$, $\simeq$ cannot satisfy Def.~\ref{def:stuttBisim}. \qed
\end{proof}

\subsection{Model Checking}

We introduce Linear Temporal Logic without \textit{next}-operator ($\ltln$) as a formal specification language for the temporal behaviour of a system and its states~\cite{DBLP:books/daglib/baierkatoen,DBLP:conf/focs/Pnueli77}. $\ltln$ formulas are constructed according to the following grammar:
$$
    \varphi ::= \text{true} \mid p \mid \varphi \wedge \varphi \mid \neg \varphi \mid \varphi \until \varphi
$$
The model checking problem for $\ltln$ is to decide whether transition system $\cal M$ satisfies a given $\ltln$ formula $\varphi$, where the satisfaction relation $\models$ for trajectories of $\cal M$ is defined as 
\begin{alignat*}{4}
	\tau, i & \models \text{true}\\
	\tau, i &\models p &&\text{iff }&& p \in \llangle s_i \rrangle \text{ where }\tau = s_0, s_1, s_2, \dots\\
	\tau,i &\models \varphi_1 \wedge \varphi_2 \quad &&\text{iff } &&\tau,i \models \varphi_1 \text{ and } \tau,i \models \varphi_2\\
	\tau,i &\models \neg \varphi \quad &&\text{iff } &&\tau,i \not\models \varphi\\
 	\tau,i &\models \varphi_1 \until \varphi_2 &&\text{iff } &&
    \text{for some finite $k \geq i$, $\tau, k \models \varphi_2$ and}  \\
    &&&&&\text{$\tau, j \models \varphi_1$ for all $j = i, \dots, k-1$}
\end{alignat*}
and is lifted to the entire transition system by requiring that every initialised trajectory satisfies $\varphi$:
$$ \mathcal{M} \models \varphi \text{ iff } \tau,0 \models \varphi \text{ for all infinite initialised trajectories }\tau\text{ of }{\cal M}.$$

We also introduce the derived operators \textit{"eventually"} $\lozenge$ and \textit{"globally"} $\square$. The formula $\lozenge \varphi := \text{true} \until \varphi$ states that $\varphi$ must be true in some state on the trajectory. The formula $\square \varphi := \neg(\lozenge \neg \varphi)$ requires that $\varphi$ holds true in all states of the trajectory. We do not include the \textit{"next"} operator $\nxt$ from full LTL since we are interested in stutter-insensitive bisimulations, which do not preserve a system's stepwise behavior as expressed by the \textit{next}-operator.
It is a well-known fact that divergence-sensitive stutter-insensitive bisimulations preserve specifications expressable in $\ltln$~\cite{DBLP:books/daglib/baierkatoen}. Divergence-sensitivity is crucial to properly treat stutter-trajectories, i.e., trajectories that forever stutter inside the same equivalence class~\cite{DBLP:journals/jacm/NicolaV95}. However, for deterministic transition systems, each state has only one outgoing trajectory that either eventually leaves its equivalence class or stutters indefinitely. Therefore, any stutter-insensitive bisimulation on a deterministic system must be divergence-sensitive, as stated in Lemma~\ref{lem:stuttdiv}.
\begin{lemma}
    Every stutter-insensitive bisimulation on any deterministic labelled transition system is divergence-sensitive.
    \label{lem:stuttdiv}
\end{lemma}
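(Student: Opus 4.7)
The plan is to argue by contradiction, exploiting determinism to conclude that the unique outgoing trajectory from a divergent state witnesses divergence, and then deriving a contradiction from the stutter-insensitive stability condition applied to the unique trajectory from any equivalent state.

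First I would fix $s \simeq t$ with $s$ being $\simeq$-divergent, and suppose toward contradiction that $t$ is not $\simeq$-divergent. Because $\cal M$ is deterministic, the transition function induces a unique infinite trajectory $s_0, s_1, s_2, \dots$ from $s_0 = s$, and this trajectory must be the witness of divergence; hence $s_i \simeq s$ for every $i$. Similarly, the unique trajectory $t_0, t_1, t_2, \dots$ from $t_0 = t$ cannot remain forever inside the class $[s]_\simeq = [t]_\simeq$, so there is a least index $k \geq 1$ with $t_k \not\simeq t$, while $t_i \simeq t$ for all $i < k$. In particular, $t_{k-1} \simeq t \simeq s$ and $t_k \in T(t_{k-1})$ with $t_k \not\simeq t_{k-1}$.

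Next I would apply Definition~\ref{def:stuttBisim} with the roles of $s$ and $t$ swapped: from $t_{k-1} \simeq s$ and $t_{k-1} \not\simeq t_k \in T(t_{k-1})$, the stutter-insensitive bisimulation condition yields a finite trajectory $u_0, u_1, \dots, u_m$ with $u_0 = s$, $u_i \simeq t_{k-1} \simeq s$ for $i = 1, \dots, m-1$, and $u_m \simeq t_k$, so $u_m \not\simeq s$. Since $\cal M$ is deterministic, this finite trajectory must coincide with the prefix $s_0, s_1, \dots, s_m$ of the unique trajectory from $s$, i.e., $u_m = s_m$. But $s_m \simeq s$ by divergence, contradicting $u_m \not\simeq s$. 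Hence $t$ is $\simeq$-divergent, and $\simeq$ is divergence-sensitive. \qed

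The main obstacle is being careful about the direction in which the stability condition is applied: the definition is phrased for a transition out of $s$, but here I need a matching trajectory from $s$ for a transition out of the equivalent state $t_{k-1}$, which is why the symmetry of $\simeq$ (as an equivalence relation) together with the universal quantification over $s, s', t$ in Definition~\ref{def:stuttBisim} is essential. The role of determinism then closes the argument by forcing the matching trajectory to coincide with a prefix of the unique divergent trajectory from $s$.
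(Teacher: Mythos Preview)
Your proof is correct and follows essentially the same approach as the paper: both argue by contradiction, locate the first exit of the unique trajectory from $t$ out of the class $[s]_{\simeq}$, and then invoke the stutter-insensitive stability condition (applied in the direction $t_{k-1} \simeq s$) together with determinism to force the unique trajectory from $s$ to exit the class as well, contradicting divergence. Your version spells out the application of Definition~\ref{def:stuttBisim} and the identification of the matching trajectory with a prefix of $s_0, s_1, \dots$ more explicitly than the paper's terse proof, but the underlying argument is the same.
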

\begin{proof}
     Let $\cal M$ be a deterministic transition system and $\simeq$ be a stutter-insensitive bisimulation on $\cal M$. Let $s \simeq t$ and assume $s \simeq$-divergent but $t \text{ not} \simeq$-divergent. As $t \text{ not} \simeq$-divergent, there exists a finite trajectory 
 $\tau = t,t_1,\dots,t_n,t'$ with
    $t \simeq t_i, \forall i \leq n \text{ and } t \not \simeq t'$, for some $n \geq 0$. This implies that there exists a state $u \simeq s$ with $s \not \simeq t' \in T(u)$. However, since $\cal M$ is deterministic and $s$ is $\simeq$-divergent, the unique trajectory $\tau = s, s_1, \dots$ initalised in $s$ satisfies $s \simeq s_i, \forall i \geq 0$, which is a contradiction. \qed 
\end{proof}

\begin{theorem}
    \label{thm:divsens}
    Let $\cal M$ be a deterministic labelled transition system.
    If $\simeq$ is a stutter-insensitive bisimulation on ${\cal M}$, then ${\cal M} \models \varphi$ if and only if ${\cal M}/_\simeq \models \varphi$ for any $\ltl_{\setminus \nxt}$ formula $\varphi$.
\end{theorem}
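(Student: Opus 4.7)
The plan is to reduce the theorem to the standard preservation result for divergence-sensitive stutter-insensitive bisimulations cited from Baier--Katoen. By Lemma~\ref{lem:stuttdiv}, the determinism of $\cal M$ already implies that $\simeq$ is divergence-sensitive, so the standard result applies directly. For a self-contained argument, I would structure the proof in two steps: (i) establish a correspondence between initialised trajectories of $\cal M$ and those of ${\cal M}/_\simeq$ modulo stuttering, and (ii) show by structural induction that stuttering-equivalent trajectories agree on every $\ltl_{\setminus \nxt}$ formula.

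For step (i), I would first formalise stuttering equivalence of two trajectories as the existence of decompositions into non-empty blocks of identically-labelled states such that the $n$-th blocks of the two trajectories share the same label. Given an initialised trajectory $\tau = s_0, s_1, \dots$ of $\cal M$, grouping maximal runs of $\simeq$-related states yields a sequence of equivalence classes. If this sequence is infinite, Definition~\ref{def:quotient}(1) shows it is a trajectory of ${\cal M}/_\simeq$; if it is finite because $\tau$ eventually stays in some class $R$, then $R$ contains a $\simeq$-divergent state (by determinism) and Definition~\ref{def:quotient}(2) together with Lemma~\ref{lem:detQuo} yields a self-loop at $R$, producing an abstract trajectory that is stuttering equivalent to $\tau$. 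Conversely, given an initialised trajectory $R_0, R_1, \dots$ of ${\cal M}/_\simeq$ starting from $R_0$ containing a given representative $s_0 \in I$, I would inductively use Definition~\ref{def:stuttBisim} to lift each transition $R_i \to R_{i+1}$ (when $R_i \neq R_{i+1}$) to a finite concrete trajectory passing through $R_i$ and landing in $R_{i+1}$, and use divergence-sensitivity to handle self-loops at $R_i$ by taking an infinite stuttering continuation.

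For step (ii), label preservation of $\simeq$ handles the atomic proposition case, Boolean cases are immediate, and the until case is the crux. If $\tau$ and $\tau'$ are stuttering equivalent with matching block decompositions, and $\tau, i \models \varphi_1 \until \varphi_2$ with witness index $k$, then by the inductive hypothesis and the fact that all states within a stuttering block satisfy the same subformulas, I can transfer the witness to the corresponding block index in $\tau'$; the positions inside the traversed blocks of $\tau'$ satisfy $\varphi_1$ because their partner blocks in $\tau$ do. The main obstacle is precisely this stuttering-block alignment in the until case: the witness $k$ in $\tau$ may land mid-block, and one must argue that the first index of the matching block in $\tau'$ still witnesses the formula, which relies on the fact that $\ltl_{\setminus \nxt}$ cannot distinguish positions inside a block. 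Combining this with step (i) gives the required biconditional: ${\cal M} \models \varphi$ iff every initialised trajectory of ${\cal M}/_\simeq$ satisfies $\varphi$, i.e., ${\cal M}/_\simeq \models \varphi$.
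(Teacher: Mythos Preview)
Your proposal is correct, and your primary approach---reducing to the standard Baier--Katoen preservation result for divergence-sensitive stutter-insensitive bisimulations via Lemma~\ref{lem:stuttdiv}---is exactly what the paper does; the paper's proof consists of precisely those two sentences. Your additional self-contained sketch (steps (i) and (ii)) goes beyond the paper, which simply cites the standard result without unpacking it.
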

\begin{proof}
 Any divergence-sensitive stutter-insensitive bisimulation $\simeq$ on any (possibly non-deterministic) transition system $\cal M$ implies, for every 
 $\ltln$ formula $\varphi$, 
 the model checking problems ${\cal M} \models \varphi$ and ${\cal M}/_\simeq \models \varphi$ have the same answer.
 Lemma~\ref{lem:stuttdiv} establishes that, since $\cal M$ is deterministic and $\simeq$ is stutter-insensitive on $\cal M$,
 then $\simeq$ is also divergence-sensisitve. Therefore, the statement follows.  
 \qed
\end{proof}

\begin{remark}
    Theorem~\ref{thm:divsens} in general does not hold for nondeterministic transition systems, as can be seen by the counterexample in Figure~\ref{fig:ctxnondet}.
\begin{figure}[h]
\centering

\scalebox{0.93}{
	\hspace*{4em}\begin{tikzpicture} [node distance = 2cm, on grid, auto]
 
 \begin{scope}[name prefix =G1, xshift= 0cm]
        \node (q0) [state, initial text = {}, fill = blue!30!white] {$s_0$};
        \node (q1) [state, initial,initial text = {},below = of q0, fill = blue!30!white] {$s_1$};
        \node (q2) [state, right = of q0,fill = red!30!white,yshift = -1cm] {$s_2$};
           \node [below of=q1, yshift = 1cm, xshift = 2.4cm] {\parbox{0.3\linewidth}{{\large{$\cal M$}}\label{subfig:a}}};
        \path [-stealth, thick]
            (q0) edge node {$ $}   (q2)
            (q1) edge node {$ $}  (q2)
            (q0) edge [loop above]  node {$ $}()
            (q2) edge [loop above]  node {$ $}();
        \path [dashed, thick]
            (q0) edge node {$ $} (q1);
 \end{scope}
 
 \begin{scope}[xshift = 5cm, name prefix =G2, yshift = -1cm]
        \node (q0) [state, initial, initial text = {}, fill = blue!30!white] {$R$};
        \node (q1) [state, right = of q0,,fill = red!30!white] {$Q$};
           \node [below of=q0, xshift = 2.4cm, yshift = 0cm] {\parbox{0.3\linewidth}{{\large{$\cal M/_{\simeq}$}}\label{subfig:a}}};
        \path [-stealth, thick]
            (q0) edge node {$ $}   (q1)
            (q0) edge [loop above]  node {$ $}()
            (q1) edge [loop above]  node {$ $}();
    \end{scope}
\end{tikzpicture}
}
\caption{A stutter-insensitive but not divergence-sensitive bisimulation $\simeq$ is indicated by the dashed lines. It holds that ${\cal M} \models \lozenge(\text{red})$ but ${\cal M/_{\simeq}} \not  \models \lozenge(\text{red})$.}
\label{fig:ctxnondet}
\vspace{-5mm}
\end{figure}
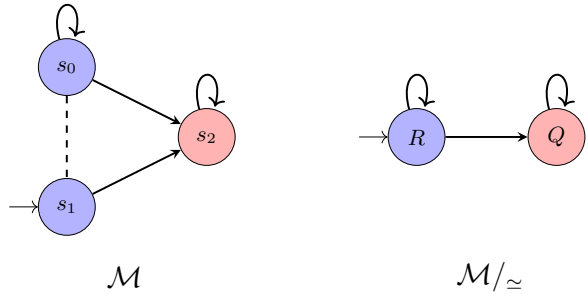

\end{remark}

\begin{remark}
It may seem counterintuitive to the reader to relate stutter-insensitive bisimulation and $\ltln$-equivalence, as it is usually associated with the more expressive $\ctln$~\cite{DBLP:books/daglib/baierkatoen}. However, recall that we focus on deterministic transition systems for which both logics coincide in expressivity.
\end{remark}

\section{Counterexample-guided Bisimulation Learning}

This section introduces our main contributions. We present our adaption of Namjoshi's well-founded bisimulations to deterministic transition systems~\cite{DBLP:conf/fsttcs/Namjoshi97} and describe a counterexample guided learning algorithm for simultaneous computation of a stutter-insensitive bisimulation and its corresponding quotient from a finite sampling of the state space. Well-founded bisimulation implements the stability conditions of stutter-insensitive bisimulation (see Def.~\ref{def:stuttBisim}) in the form of ranking functions that map states to a well-founded set, ensuring that a finite trajectory matches every transition of equivalent states. We present an adaption to deterministic systems that only requires the ranking functions to map single states of certain classes to a well-founded set and show that this characterizes stutter-insensitive bisimulation. Furthermore, we show that the problem of computing a stutter-insensitive bisimulation can be rephrased to finding a classifier on states and ranking functions for the corresponding classes.

\begin{theorem}
    Let $\cal M$ be a deterministic labelled transition system with state space $S$ and transition function $T$. Let $\simeq$ 
    be a label-preserving partition on $\cal M$. Suppose that for every region $R \in S/_\simeq$ there exists a function 
    $h_R \colon S \to \bbbn$ such that,
    for every $R \neq Q \in T/_\simeq(R)$, the following condition holds:
    \begin{equation}
        \forall s \in R \colon 
        T(s) \in Q \lor [T(s) \in R \land h_R(s) > h_R(T(s))].
        \label{eqn:rank}
    \end{equation}
    Then, $\simeq$ is a stutter-insensistive bisimulation on ${\cal M}$.
    \label{thm:rankstutt}
\end{theorem}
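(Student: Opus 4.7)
The plan is to verify the stutter-insensitive bisimulation stability condition (Def.~\ref{def:stuttBisim}) directly using the hypothesized ranking functions. Let $s, s', t \in S$ with $s \simeq t$, $s \not\simeq s'$, and $s' \in T(s)$ (since $\cal M$ is deterministic, I will freely treat $T(\cdot)$ as a single state). Let $R \in S/_\simeq$ be the class containing $s$ and $t$, and let $Q \in S/_\simeq$ be the class containing $s'$. I would first observe that $R \neq Q$ because $s \not\simeq s'$, and that $Q \in T/_\simeq(R)$ by Def.~\ref{def:quotient} item~(1), since $T(s) = s' \in Q$ with $s \in R$. Hence the hypothesis yields a function $h_R \colon S \to \mathbb{N}$ satisfying condition~\eqref{eqn:rank} for this pair $(R,Q)$.

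Next I would construct the required finite trajectory $t_0, t_1, \dots, t_k$ by induction on the rank. Set $t_0 = t \in R$. Suppose $t_0, \dots, t_n \in R$ have been defined. Apply~\eqref{eqn:rank} at $u = t_n \in R$: either $T(t_n) \in Q$, in which case I set $t_{n+1} = T(t_n)$ and stop with $k = n+1$; or $T(t_n) \in R$ and $h_R(t_n) > h_R(T(t_n))$, in which case I set $t_{n+1} = T(t_n) \in R$ and continue. By construction each $t_i \in R$ for $i = 0, \dots, k-1$, which gives $t_i \simeq s$ for those indices, and $t_k \in Q$ means $t_k \simeq s'$, as required.

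It remains to argue that this inductive process terminates, which is the step that uses the ranking functions essentially. If the second branch of~\eqref{eqn:rank} were taken forever, one would obtain an infinite sequence $h_R(t_0) > h_R(t_1) > h_R(t_2) > \cdots$ in $\mathbb{N}$, contradicting the well-foundedness of the natural numbers. Hence the first branch must be taken at some finite step $k$, producing the desired trajectory and completing the verification of Def.~\ref{def:stuttBisim}.

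The argument is essentially a well-foundedness chase, so the only subtle point I would be careful about is bookkeeping: matching the quotient construction in Def.~\ref{def:quotient} so that $Q \in T/_\simeq(R)$ holds whenever the side condition $s' \in T(s)$ with $s \in R, s' \in Q$ arises, and checking that the trajectory we construct meets the indexing convention of Def.~\ref{def:stuttBisim} (with $t_0 = t$ and $t_k \simeq s'$, and intermediate $t_i$ all in $R$). Determinism of $\cal M$ is what makes the ranking function need only track a single successor per state, so no branching bookkeeping is needed.
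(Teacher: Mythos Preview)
Your proposal is correct and follows essentially the same argument as the paper. The only stylistic difference is that the paper argues by contradiction (assuming no matching trajectory from $t$ exists and then splitting into the two impossible cases of infinite stuttering in $R$ versus exiting to some $V \neq Q$), whereas you build the trajectory directly by iterating condition~\eqref{eqn:rank} and invoking well-foundedness of $\mathbb{N}$; both rely on exactly the same two consequences of~\eqref{eqn:rank}, namely that the only permitted exit from $R$ is to $Q$ and that stuttering cannot continue forever.
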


\begin{proof} 
Let $s \simeq t$ such that $s \not \simeq s' \in T(s)$. This implies that $\exists R \not = Q \in S/_{\simeq} \colon s,t \in R \text {  and } s' \in Q$, and $Q \in T/_{\simeq}(R)$. Assume there exists no finite trajectory $\tau = t,t_1,\dots,t_n,t', n \geq 0$ with
    $t \simeq t_i, \forall i \leq n \text{ and } s' \simeq t'$. As $\cal M$ is deterministic, we only need to distinguish the two cases:
    \begin{itemize}
        \item The infinite trajectory $\tau = t,t_1,t_2,\dots$ stutters infinitely in $R$, i.e., $t_i \in R, \forall i \geq 1$. This contradicts the ranking property~\ref{eqn:rank}.

        \item There exists a finite trajectory $\tau = t,t_1,\dots,t_n,t', n \geq 0$ with $t' \in V \not = Q$ and $t_i \in R, \forall i \leq n$. However $t_n \in R$ and $t' \in V \not = Q$ is a contradiction to the ranking property~\ref{eqn:rank} only allowing for an exit to $Q$. \qed 
    \label{thm:rank}
    \end{itemize}

The ranking functions $h_R$ in Theorem~\ref{thm:rankstutt} ensure that if class $R$ has an outgoing transition to $Q$, for all states in $R$ either (1) their successor is in $Q$ or (2) their successor is in $R$ and $h_R$ decreases when transitioning. As the value of $h_R$ is bounded from below and must strictly decrease along any trajectory, no trajectory can stutter in $R$ indefinitely and must eventually enter $Q$ (see Fig.~\ref{fig:thm2int}). 

\begin{figure}
    \centering
        \scalebox{1}{
        \begin{tikzpicture}

                \draw[] plot[smooth cycle] coordinates {(.7,-.3) (4.3,-.3)  (4.7,2.3)  (.5,2.3)};

                \draw[] plot[smooth cycle] coordinates {(5.3,-.3) (7.1,-.3)  (7.6,2.3)  (5.8,2.3)};

                \draw [] (0.2,2.7) node (s) {\Large$R$};
                \draw [] (5.7,2.7) node (s) {\Large$Q$};
                \begin{scope}[scale = 0.82, every node/.append style={transform shape}, yshift = -.5cm, xshift = -.2cm]
                        \draw [] (5,2.5) node (s) {\Large$s$};
                        \draw [] (8,2.57) node (sp) {\Large$s'$};
                        \draw [] (1.5,1.52) node (t) {\Large$t$};
                        \draw [] (2.8,1.48) node (t1) {\Large$t_1$};
                        \draw [] (5,1.48) node (tn) {\Large$t_n$};
                        \draw [] (8,1.57) node (tp) {\Large$t'$};
                
                        \draw [] (1.5,.8) node (ht) {\color{darkgray}$h_R(t)$};
                        \draw [] (2.1,.8) node () {\color{darkgray}$>$};
                        \draw [] (2.8,.8) node (ht1) {\color{darkgray}$h_R(t_1)$};
                        \draw [] (3.45,.8) node () {\color{darkgray}$>$};
                        \draw [] (3.9,.8) node () {\color{darkgray}$\dots$};
                        \draw [] (4.3,.8) node () {\color{darkgray}$>$};
                        \draw [] (5,.8) node (htn) {\color{darkgray}$h_R(t_n)$};
                
                        \draw [-stealth](5.3,2.6) [out = 30, in = 150]to (7.7,2.6);
                        \draw [-stealth](5.3,1.6) [out = 30, in = 150]to (7.7,1.6);
                        
                        \draw [-stealth](1.8,1.5) -- (2.4,1.5);
                        \draw [](3.1,1.5) -- (3.4,1.5);
                        \draw [-stealth](4.2,1.5) -- (4.6,1.5);
                        \draw [] (3.85,1.5) node (dots) {\Large$\dots$};
                \end{scope}

    \end{tikzpicture}
     }
 
        \caption{Intuitive representation of Theorem~\ref{thm:rankstutt}. Since a state $s \in R$ has a successor in $Q$, the value of $h_R$ must strictly decrease along any trajectory through $R$. Therefore, all states in $R$ eventually transition to $Q$ after possible stuttering.}
        \label{fig:thm2int}
        \vspace{-5mm}
\end{figure}
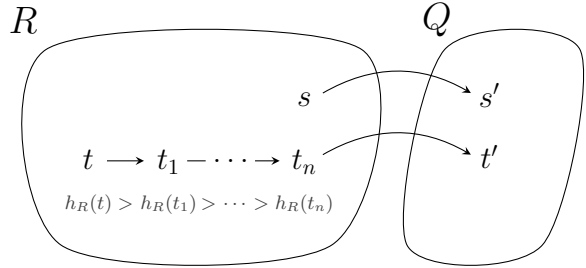

\begin{remark}
    For deterministic systems, strong bisimulations form a special case of stutter-insensitive bisimulations, not allowing for any stuttering. In our formulation, they only admit constant ranking functions $h_R$. Strong bisimulations preserve a system's stepwise behavior, hence, full LTL including the \textit{next}-operator~\cite{DBLP:books/daglib/baierkatoen}. However, they may induce much larger quotients, less suitable for verifying large systems with long stuttering intervals. Furthermore, there exist infinite state systems that do not admit a finite strong bisimulation, but do admit a finite stutter-insensitive bisimulation quotient.
\end{remark}

\end{proof}

We aim to phrase the problem of finding a suitable partition and ranking functions that satisfy the conditions in Theorem~\ref{thm:rankstutt} as a learning problem. For that, we introduce the notion of state classifiers.

\begin{definition}[State Classifier]
    A state classifier on a labelled transition system with state space $S$ is any function $f \colon S \to C$ that maps states to a finite set of classes $C$. It is label-preserving if $f(s) = f(t)$ implies 
    $\llangle s \rrangle = \llangle t \rrangle$.
\end{definition}

We can now state Theorem~\ref{thm:rankstutt} for a state classifier $f$ and give sufficient conditions for $f$ to induce a valid stutter-insensitive bisimulation. 

\begin{theorem}
    Let $\cal M$ be a deterministic labelled transition system with state space $S$ and transition function $T$. Suppose that there exists a label-preserving state classifier $f \colon S \to C$, a function  $g \colon C \to C$ and functions $h_c \colon S \to \mathbb{N}$ for each $c \in C$ such that, for every $c \neq d \in C$ and $s \in S$, the following two conditions hold:
\begin{align}
    &f(s) = c \land g(c) = d \implies f(T(s)) = d \lor [f(T(s)) = c \land h_c(s) > h_c(T(s))], \label{eqn:stateclass1}\\
    &f(s) = c \land f(T(s)) = d \implies g(c) = d.\label{eqn:stateclass2}
\end{align}
Then, $\simeq_f$ defined as $\simeq_f = \{(s,t) \mid f(s) = f(t)\}$ is a stutter-insensitive bisimulation on $\cal M$ and $T_{\simeq_f}(f^{-1}[c]) = \{f^{-1}[g(c)]\}$. 
\label{thm:stateclass}
\end{theorem}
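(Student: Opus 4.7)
The plan is to reduce this statement to Theorem~\ref{thm:rankstutt} by exhibiting, for each equivalence class of $\simeq_f$, a ranking function satisfying condition~(\ref{eqn:rank}). First, $\simeq_f$ is clearly an equivalence relation; its classes are exactly the non-empty preimages $R_c := f^{-1}[c]$ for $c \in C$, and it is label-preserving because $f$ is. For each such class $R_c$ I take $h_{R_c} := h_c$ as the candidate ranking function. To verify the hypothesis of Theorem~\ref{thm:rankstutt}, pick any $R_c \neq R_d \in T/_{\simeq_f}(R_c)$. By Definition~\ref{def:quotient}, there is some $s^\star \in R_c$ with $T(s^\star) \in R_d$, so $c \neq d$, and condition~(\ref{eqn:stateclass2}) forces $g(c) = d$. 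Then for every $s \in R_c$, condition~(\ref{eqn:stateclass1}) applied with this $d$ yields $T(s) \in R_d$, or else $T(s) \in R_c$ with $h_c(s) > h_c(T(s))$---which is exactly condition~(\ref{eqn:rank}). Theorem~\ref{thm:rankstutt} then gives that $\simeq_f$ is a stutter-insensitive bisimulation.

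To establish $T/_{\simeq_f}(R_c) = \{R_{g(c)}\}$ I would case split on whether $g(c) = c$. If $g(c) = c$, then by condition~(\ref{eqn:stateclass2}) no $s \in R_c$ can transition to a different class (otherwise $g(c)$ would equal some $d \neq c$), so $R_c$ is closed under $T$, every state in $R_c$ is $\simeq_f$-divergent, and by Definition~\ref{def:quotient} we obtain $T/_{\simeq_f}(R_c) = \{R_c\} = \{R_{g(c)}\}$. If $g(c) \neq c$, then condition~(\ref{eqn:stateclass1}) makes $h_c$ strictly decrease along every stutter within $R_c$; by well-foundedness of $\mathbb{N}$ no trajectory can stutter in $R_c$ indefinitely, so no state of $R_c$ is $\simeq_f$-divergent, and each trajectory must eventually leave $R_c$---at which point condition~(\ref{eqn:stateclass1}) forces the exit to land in $R_{g(c)}$, showing $R_{g(c)} \in T/_{\simeq_f}(R_c)$. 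Condition~(\ref{eqn:stateclass2}) simultaneously rules out transitions from $R_c$ to any other class $R_e$ with $e \notin \{c, g(c)\}$, giving $T/_{\simeq_f}(R_c) = \{R_{g(c)}\}$.

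The main obstacle is not the bisimulation claim itself---it is essentially a direct application of Theorem~\ref{thm:rankstutt} once we identify $h_{R_c}$ with $h_c$---but rather the characterisation of the quotient transition function. That part needs the combined use of determinism, condition~(\ref{eqn:stateclass2}) to exclude transitions to ``wrong'' classes, and well-foundedness of $\mathbb{N}$ to rule out $\simeq_f$-divergence whenever $g(c) \neq c$.
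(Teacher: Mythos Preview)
Your argument is correct. The first half---reducing the bisimulation claim to Theorem~\ref{thm:rankstutt} by taking $h_{R_c} := h_c$ and using condition~(\ref{eqn:stateclass2}) to pin down $g(c) = d$ whenever $R_d \in T/_{\simeq_f}(R_c)$ with $c \neq d$---is essentially the paper's proof verbatim.

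The second half differs in structure. The paper first invokes Lemma~\ref{lem:detQuo} to conclude that $T/_{\simeq_f}(f^{-1}[c])$ is a singleton, and then identifies that singleton by casing on whether the unique quotient-successor of $R_c$ equals $R_c$ (divergent case) or not. You instead case on whether $g(c) = c$ and compute $T/_{\simeq_f}(R_c)$ directly in each case, without appealing to Lemma~\ref{lem:detQuo} at all: when $g(c)=c$ you use condition~(\ref{eqn:stateclass2}) to show $R_c$ is $T$-closed, hence every state is divergent; when $g(c)\neq c$ you use well-foundedness to rule out divergence and force the exit into $R_{g(c)}$. Both routes are sound. Yours is slightly more self-contained (it re-derives quotient determinism as a byproduct rather than citing it), while the paper's route separates the concern of ``why is the quotient deterministic'' from ``which class is the unique successor'', making the role of Lemma~\ref{lem:detQuo} explicit.
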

\begin{proof}
We first show that $\simeq_f$ is a stutter-insensitive bisimulation on $\cal M$. Since $f$ is label-preserving, $\simeq_f$ is label-preserving by definition. The quotient space is the set of non-empty pre-images of the classes $C$ under $f$, i.e., $S/_{\simeq_f} = \{f^{-1}[c] \mid c \in C\} \setminus \emptyset$. By definition of $T/_{\simeq_f}$ (see Def.~\ref{def:quotient}) it holds that $f^{-1}[c] \neq f^{-1}[d] \in T/_{\simeq_f}(f^{-1}[c])$ implies that there exists an $s \in f^{-1}[c]$ with $T(s) \in f^{-1}[d]$. With Condition~\ref{eqn:stateclass2} this implies that $g(c) = d$. The claim follows by Condition~\ref{eqn:stateclass1} and Theorem~\ref{thm:rankstutt}. We now show that $T_{\simeq_f}(f^{-1}[c]) = \{f^{-1}[g(c)]\} $. Since $\simeq_f$ is a stutter-insensitive bisimulation Lemma~\ref{lem:detQuo} implies that $T/_{\simeq}(f^{-1}[c])$ can only be a singleton for any $c \in C$ . We distinguish the two cases:
\begin{itemize}
    \item $f^{-1}[c] \not = f^{-1}[d] \in T/_{\simeq_f}(f^{-1}[c])$, then $f^{-1}[c] \not = f^{-1}[d]$ implies that $c \not = d$. By Def.~\ref{def:quotient} there must exist a $s \in f^{-1}[c]$ with $T(s) \in f^{-1}[d]$, which by Condition~\ref{eqn:stateclass2} implies that $g(c) = d$. 
    \item $f^{-1}[c] \in T/_{\simeq_f}(f^{-1}[c])$, then some state in $s \in f^{-1}[c]$ must be $\simeq_f$-divergent by Def.~\ref{def:quotient}. The only possibility for $g$ to be a total function and not to violate Condition~\ref{eqn:stateclass1} is $g(c) = c$, as $g(c) = d \not = c$ would contradict the $\simeq_f$-divergency of $s$ due to Condition~\ref{eqn:stateclass1}.\qed 
\end{itemize}
\end{proof}
\begin{remark}
    Note that Theorem~\ref{thm:stateclass} requires $g$ to be well-defined, i.e., represent a deterministic transition function. However, this is not a restriction as per Lemma~\ref{lem:detQuo} any stutter-insensitive bisimulation on a deterministic transition system has a deterministic quotient. The fact that $g$ has to be total additionally requires it to correctly account for the self-loops of the divergent classes.
\end{remark}

In Theorem~\ref{thm:stateclass} function $g$ takes on the role of the deterministic transition function of the quotient induced by $f$. Thus, $f$ and $g$ together provide a complete description of a stutter-insensitive bisimulation quotient of the underlying transition system. In the following, we introduce our counterexample-guided learning approach for generating appropriate functions on a given transition system.

\subsection{Learner-Verifier Framework for Bisimulation Learning}

\begin{figure}[]
\vspace{-5pt}
\centering
\resizebox{1\textwidth}{!}{%
\begin{tikzpicture}

\begin{scope}[on background layer]

 \end{scope}

\draw [ fill={rgb,255:red,28; green,155; blue,181} , line width=0.2pt , fill opacity = 0.15, text opacity = 1, rounded corners] (8.25,11.03) rectangle  node (quotientsynth) {} (13.75,8.2);

\draw [ fill={rgb,255:red,28; green,155; blue,181} , line width=0.2pt , fill opacity = 0.3, text opacity = 1, rounded corners] (8.5,10.25) rectangle  node (learner) {\scriptsize Learner} (10.5,9);

\draw [ fill={rgb,255:red,28; green,155; blue,181} , line width=0.2pt , fill opacity = 0.3, text opacity = 1, rounded corners] (11.5,10.25) rectangle  node (certifier) {\scriptsize Verifier} (13.5,9);

\node[] at (7.65,10.81) {\scriptsize$f$};
\node[] at (7.65,8.4) {\scriptsize UNSAT};

\draw [ fill={rgb,255:red,237; green,81; blue,133}
, line width=0.2pt , fill opacity = 0.3, text opacity = 1, rounded corners, align = center] (4.7,10.25) rectangle  node (subquo) {\scriptsize Partition\\[-3pt]\scriptsize Template} (6.7,9);

\draw[-stealth] ([yshift=3.6em, xshift=0em]subquo.north) to[bend left = 0] ([yshift=.9em, xshift = 0em]subquo.north);
\node[] at (5.7,11.4) {\scriptsize$\llangle \cdot \rrangle$};

\draw[-stealth] ([yshift=4.05em, xshift=0em]learner.north) to[bend left = 0] ([yshift=1.35em, xshift = 0em]learner.north);

\node[] at (9.5,11.4) {\scriptsize$D, g, \{h_c\}_{c \in C}$};
\node[] at (12.7,11.4) {\scriptsize$S, T, g,\{h_c\}_{c \in C}$};

\draw[-stealth] ([yshift=4.05em, xshift=0em]certifier.north) to[bend left = 0] ([yshift=1.35em, xshift = 0em]certifier.north);

\draw[-stealth] ([yshift=.85em, xshift=1em]subquo.north) to[bend left = 21] ([yshift=1.4em, xshift = -1em]learner.north);
\draw[-stealth] ([yshift=-1.3em, xshift=-1em]learner.south) to[bend left = 21] ([yshift=-0.9em, xshift = 1em]subquo.south);

\draw[-stealth] ([yshift=1.3em, xshift=1em]learner.north) to[bend left = 27] ([yshift=1.4em, xshift = -1em]certifier.north);
\draw[-stealth] ([yshift=-1.3em, xshift=-1em]certifier.south) to[bend left = 27] ([yshift=-1.4em, xshift = 1em]learner.south);

\node[] at (11,10.8) {\scriptsize$\hat{\theta}, \hat{\gamma}, \hat{\eta}, f$};
\node[] at (11,8.45) {\scriptsize$\hat{s}_{cex}$};
\node[] at (11,7.9) {\scriptsize Bisimulation Learning};

\draw [ fill={rgb,255:red,81; green,237; blue,91} , line width=0.2pt , fill opacity = 0.4, text opacity = 1, rounded corners, align = center] (15.5,10.25) rectangle  node (veri) {\scriptsize Model\\[-3pt] \scriptsize Checking} (17.5,9);

\draw[-stealth] ([yshift=3.6em, xshift=0em]veri.north) to[bend left = 0] ([yshift=.9em, xshift = 0em]veri.north);

\draw[-stealth] ([yshift=0em, xshift=1.35em]certifier.east) to[bend left = 0] ([yshift=0em, xshift = -1em]veri.west);

\node[align = center] at (16.5,11.4) {\scriptsize $\varphi$};

\node[align = center] at (14.62,9.9) {\scriptsize $f(\hat{\theta}; \cdot), g(\hat{\gamma}; \cdot)$};

\draw[-stealth] ([yshift=-.8em, xshift=-.85em]veri.south) to[bend left = 0] ([yshift=-2.4em, xshift = -1.6em]veri.south);

\draw[-stealth] ([yshift=-.8em, xshift=.85em]veri.south) to[bend left = 0] ([yshift=-2.4em, xshift = 1.6em]veri.south);

\node[align = center] at (15.7,8.1) {\scriptsize \cmark \\[-4pt] \tiny Property\\[-5pt] \tiny Satisfied};
\node[align = center] at (17.25,8.08) {\scriptsize \xmark \\[-4pt] \tiny Counter-\\[-5pt] \tiny example};

\end{tikzpicture}
}
\caption{Architeture of our learner-verifier framework for bisimulation learning.}
\label{fig:cegis}
\end{figure}
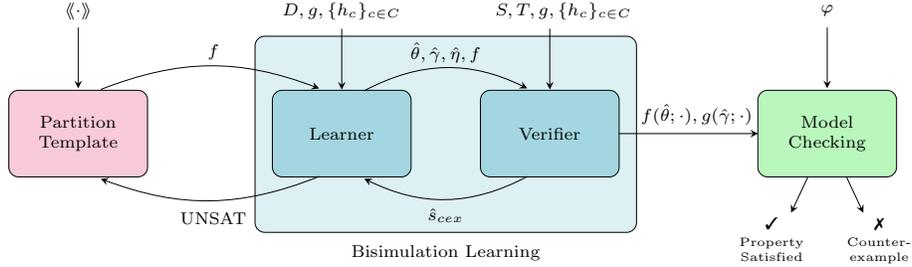

Our procedure involves two communicating components, the \textit{learner} and the \textit{verifier}, implementing a CEGIS loop. The learner proposes candidate functions that satisfy the stutter-insensitive bisimulation conditions over a finite set of sample states. The verifier checks if a counterexample state exists for which the functions proposed by the learner violate the conditions, which are then passed back to the learner to update the functions (see Figure~\ref{fig:cegis}). 

\subsubsection{Learner} We consider parametric function templates whose maps solely depend on the provided parameters. Therefore, the learner seeks suitable parameters for a label-preserving state classifier template $f \colon \Theta \times S  \to C$, a transition function template $g \colon \Gamma \times C \to C$ and ranking function templates $h_c \colon H \times S \to \mathbb{N}$ for each $c \in C$, i.e., attempts to solve:

\begin{equation}
    \exists \theta \in \Theta, \gamma \in \Gamma, \eta \in H \colon \bigwedge_{\hat{s} \in D} \Phi_1(\theta, \gamma, \eta; \hat{s}, T(\hat{s})) \wedge \Phi_2(\theta, \gamma, \eta; \hat{s}, T(\hat{s})),
    \label{eqn:phi}
\end{equation}
where $\Phi_1$ encodes Condition~\ref{eqn:stateclass1} of Theorem~\ref{thm:stateclass}:
\begin{multline}
    \Phi_1(\theta, \gamma, \eta; s, s') = \bigwedge_{c \not = d \in C} f(\theta;s) = c \land g(\gamma;c) = d \implies \\
    f(\theta;s') = d \lor [f(\theta; s') = c \land h_c(\eta;s) > h_c(\eta;s')],
\end{multline}
and $\Phi_2$ represents Condition~\ref{eqn:stateclass2}:
\begin{equation}
    \Phi_2(\theta, \gamma, \eta; s, s')= \bigwedge_{c \not = d \in C} f(\theta;s) = c \wedge f(\theta;s') = d \implies g(\gamma;c) = d,
\end{equation}
for a deterministic transition system $\cal M$ and finite set of sample states $D \subseteq S$. In our instantiation, we use an SMT-solver to seek a satisfying assignment for the parameters $\theta$, $\gamma$, and $\eta$ in the quantifier-free inner formula of \ref{eqn:phi}.

\subsubsection{Verifier}
The verifier checks the functions induced by the proposed candidate parameters $\hat{\theta}$, $\hat{\gamma}$ and $\hat{\eta}$ for generalisation to the entire state space, i.e., attempts to solve:

\begin{equation}
    \exists s \in S \colon \lnot \Phi_1(\hat{\theta}, \hat{\gamma}, \hat{\eta}; s, T(s)) \lor \lnot \Phi_2(\hat{\theta}, \hat{\gamma}, \hat{\eta}; s, T(s)).
    \label{eqn:cert}
\end{equation}

Similar to the learner, the verifier is an SMT-solver to which we hand the quantifier-free inner formula of~\ref{eqn:cert}. A found satisfying assignment for a counterexample state $s$ is returned to the learner. If the formula is unsatisfiable, the procedure terminates and has successfully synthesised a valid stutter-insensitive bisimulation and its corresponding quotient.

\subsection{Binary Decision Tree Partition Templates}
From here on, our focus is on transition systems with discrete state spaces $S \subseteq \mathbb{Z}^n$ defined over the integers. We present the parametric function templates used in our instantiation of the framework. For the state classifier templates, we employ binary decision trees with real-valued decision functions in the inner nodes. We construct binary decision trees preserving the system's labelling function and automatically enlarge them when the template is not expressive enough to fit the finite set of sample states (see Figure~\ref{fig:cegis}).
\begin{definition}[Binary Decision Tree Templates]
The set of binary decision tree templates $\mathbb{T}$ over a finite set of classes $C$ and parameters $\Theta$ consists of trees $t$, where $t$ is either 
\begin{itemize}
    \item a leaf node $\textsc{leaf}(c)$ with $c \in C$, or
    \item a decision node $\textsc{node}(p, t_1, t_2)$, where $t_1,t_2 \in \mathbb{T}$ are the left and right subtrees, and $p \colon \Theta \times S \to \mathbb{R}$ is a parametrised real-valued function of the states.     
\end{itemize}
A parametric tree template $t \in \mathbb{T}$ over classes $C$ and parameters $\Theta$ defines the parametric state classifier $f_t \colon \Theta \times S \to C$ given as  
\begin{align*}
    f_t(\theta, s) = \begin{cases}
                c &\text{if } t = \textsc{leaf}(c)\\
                f_{t_1}(\theta, s) &\text{if } t = \textsc{node}(p, t_1, t_2) \text{ and } p(\theta; s) \geq 0\\
                f_{t_2}(\theta, s) &\text{if } t = \textsc{node}(p, t_1, t_2) \text{ and } p(\theta; s) < 0.
            \end{cases}
\end{align*}
\end{definition}

\begin{figure}
    \centering
    \scalebox{0.87}{
        \begin{tikzpicture}
     \begin{scope}[name prefix =G1,xshift=3.5cm]

        \node[] (russell) 
            {\includegraphics[width=5cm]{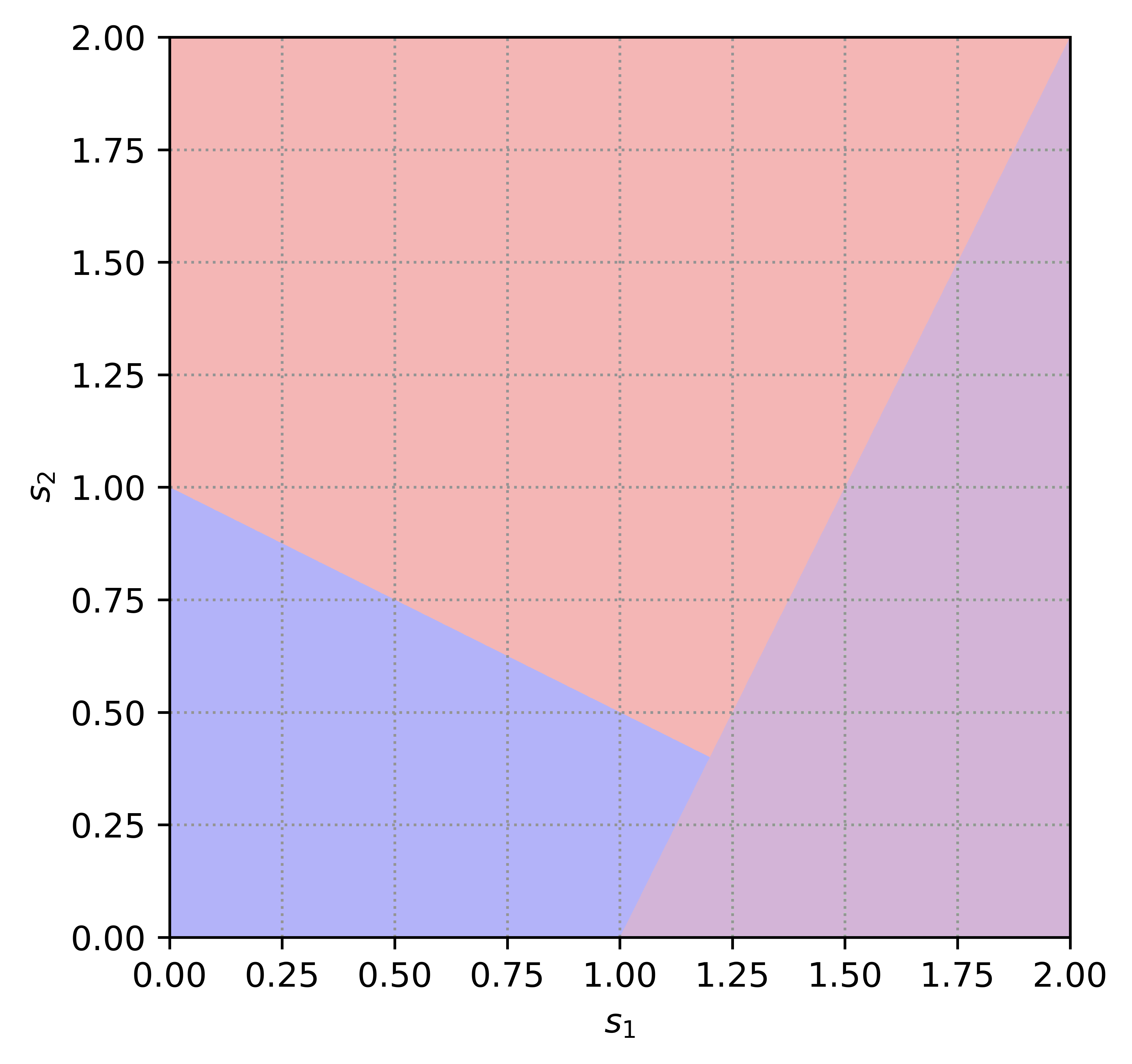}};
        \node[opacity=0.7] at (1.48, -1.05) (a) {\scriptsize$a$};
        \node[opacity=0.7] at (-1.0, -1.05) (b) {\scriptsize$b$};
        \node[opacity=0.7] at (-0.52, 1.45) (c) {\scriptsize$c$};
    \end{scope}
    
    \begin{scope}[xshift = -3cm,yshift=3cm, name prefix =G2, node distance = 1cm]
        \node (q0) [state, initial text = {}, fill = white, yshift = -1.5cm ] {\tiny$\theta_1s - \theta_2$};
        \node (q8) [state, below left = of q0, fill = violet!30!white] {$a$};
        \node (q4) [state, below right = of q0,fill = white] {\tiny$\theta_3s - \theta_4$};
        \node (q2) [state, below left = of q4,fill = blue!30!white] {$b$};
        \node (q3) [state, below right = of q4,fill = red!30!white] {$c$};

            \path [-stealth, thick]
            (G2q0) edge [] node [] {} (G2q4)
            (G2q0) edge [] node [] {} (G2q8)
            (G2q4) edge [] node [] {} (G2q2)
            (G2q4) edge [] node [] {} (G2q3);

    \end{scope}
    \end{tikzpicture}
    }
    \caption{Binary decision tree with parameters $\theta_1 = \begin{bmatrix}
	-2 & 1
\end{bmatrix}, \theta_2 = -2, \theta_3 = \begin{bmatrix}
	\frac{1}{2} & 1
\end{bmatrix}$, and $\theta_4 = 1$ for the parametrised functions, and its corresponding state classifier.} 
    \label{fig:bdt}
\end{figure}

Binary decision trees appeal as state classifier templates as they are interpretable, expressive, and simple to translate into quantifier-free expressions to instantiate the formulas of the learner and the verifier, see Figure~\ref{fig:bdt}. The parametric transition function template $g \colon \Gamma \times C \to C$ is simply a vector or list over classes $C$, indexed by $C$, and for the parametric ranking function templates $h_c \colon H \times S \to \mathbb{N}$ we consider linear functions of the form $h_c(\eta, s) = \eta_1 \cdot s + \eta_2$.

An important requirement for our procedure is that the synthesised state classifier is label-preserving. We guarantee this by constructing label-preserving templates which have this property by design for any parameter instantiation. We assume that any atomic proposition $a \in AP$ is associated with a real-valued function $p_a \colon S \to \mathbb{R}$, such that 

\begin{equation}
    \llangle s \rrangle = \{ a \in AP \mid p_a(s) \geq 0 \}. 
\end{equation}

We 
construct label-preserving templates by encoding the functions corresponding to the atomic propositions into the top nodes of the binary decision tree, i.e., fixing the functions for the top nodes to represent the observation partition (see for example~\cite{DBLP:journals/automatica/Pappas03} for a canonical construction). This resembles the prerequisite of classical partition-refinement algorithms, which are initialised from label-preserving partitions. Fixing the labelling with the top nodes ensures that any instantiated state classifier is label-preserving, and the subsequent nodes further refine the label-preserving partition. Figure~\ref{fig:polylab} shows the top nodes with fixed functions for a label-preserving binary tree template for the Euclidean algorithm from Figures~\ref{fig:euclid} and~\ref{fig:exProgress}.

\begin{figure}
    \centering
    \scalebox{0.8}{
    \begin{tikzpicture}    
    \begin{scope}[xshift = -3cm,yshift=3cm, name prefix =G2, node distance = 1cm]
        \node (q0) [state, initial text = {}, fill = white, yshift = -1.5cm ] {\small$x - y$};
        \node (q8) [below right = of q0] {\large$\vdots$};
        \node (q4) [state, below left = of q0,fill = white] {\small$y - x$};
        \node (q2) [state, below left = of q4,fill = lcyan] {$t$};
        \node (q3) [below right = of q4] {\large$\vdots$};

            \path [-stealth, thick]
            (G2q0) edge [] node [] {} (G2q4)
            (G2q0) edge [] node [] {} (G2q8)
            (G2q4) edge [] node [] {} (G2q2)
            (G2q4) edge [] node [] {} (G2q3);

    \end{scope}
    \end{tikzpicture}
    }
    \caption{Label-preserving binary decision tree template for the Euclidian algorithm from Figure~\ref{fig:euclid}. The functions are assigned with respect to the loop condition $x \not = y$. The \textit{terminated} proposition, i.e., class $t$, is assigned to the states satisfying $x \leq y$ and $y \leq x$. Any state satisfying either of $x > y$ or $y > x$ is labelled with \textit{not-terminated}.}
    \label{fig:polylab}
    
\end{figure}
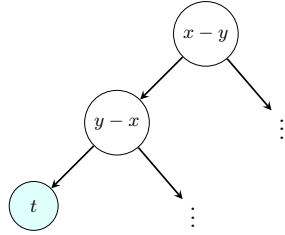

 When the binary decision tree used is too small to fit the minimum number of regions in a quotient or if it requires more decision boundaries, the learner will return UNSAT as it cannot fit a partition with the given template on the finite set of samples. In such cases, our procedure automatically increases the size of the employed BDT template and resumes bisimulation learning with the more expressive template (see Figure~\ref{fig:cegis}). Our template construction is entirely automatic and requires no user input other than the labeling function. Bisimulation learning starts from a small, automatically generated BDT template encoding the labeling function and successively enlarges the partition template as required. We enlarge the partition template by adding an additional layer to the BDT, doubling the number of available partitions.

\section{Experimental Evaluation}

We implemented our approach in a software prototype and evaluated bisimulation learning on a range of benchmark systems representing two common classes of problems: verification of reactive systems and software model checking. We compare our procedure to established state-of-the-art tools: the nuXmv model checker~\cite{DBLP:conf/cav/CavadaCDGMMMRT14,DBLP:journals/iandc/BurchCMDH92,DBLP:conf/vmcai/Bradley11} for the reactive system problems and the Ultimate~\cite{DBLP:conf/tacas/HeizmannBDFHKNSSP23} and CPAChecker~\cite{DBLP:conf/cav/BeyerK11} tools for software model checking benchmarks. All benchmarks, our implementation, and the used templates are publicly available. We employ the Z3 SMT-solver~\cite{DBLP:conf/tacas/MouraB08} in both learner and verifier, and the nuXmv model checker to verify the properties of interest on the obtained abstractions.

\subsection{Discrete-Time Clock Synchronization}
\subsubsection{Setup}
For reactive systems, we consider two distributed synchronization protocols for potentially drifted discrete clocks of distributed agents. First, the TTEthernet protocol, where all agents send their current clock value to a central synchronization master. This synchronization master computes the median clock value and sends it back to the agents, which use the received value to update their internal clocks~\cite{DBLP:conf/cpsweek/BogomolovHS16}. Second, we consider an interactive convergence algorithm where the agents directly exchange clock values and compute the average to update their internal clocks while excluding received values that differ more than a given threshold from their own~\cite{DBLP:conf/podc/LamportM84}. We check the systems for two kinds of properties: a safety invariant, which specifies that all clock valuations remain within a predefined maximum distance (G(safe)); and whether all clocks infinitely often synchronize on the same valuation (GF(sync)). Note that while the baseline procedures verify the systems regarding the given specification, our abstraction procedure is agnostic to the specification, i.e., the obtained abstraction can be used to verify arbitrary $\ltln$ formulas over the atomic propositions. 

To render verification with BDDs feasible, we leverage that all clock valuations remain within an interval that depends on the time discretization, as they are either continuously reset or enter a dead-lock state when violating the safety requirement. We explicitly pass this invariant to the BDD toolchain in the form of finite variable domains to allow for the construction of BDDs, whereas IC3 and our abstraction approach operate over unbounded integer variables, which would not be possible for BDDs.
For both benchmarks, we consider a safe variant, where the agents use the received values to update their internal clock correctly, and an unsafe version, where they stick with their internal values and drift further. In all instances, we assess multiple instances of time discretization, i.e., the sampling frequency (number of discrete time steps) for a unit second.  
\subsubsection{Results}
Table~\ref{tab:clock} presents the runtime results.
Our approach depends on generating candidate parameters and counterexamples through an SMT solver. These can vary across runs of the procedure, even under identical initial conditions (i.e., provided initial samples).  
Since this can impact the convergence speed and overall runtime of the algorithm, we conduct each experiment 10 times and report the average runtimes and standard deviations. We only report a single outcome for our approach, as we check both properties of interest on the same abstraction and the differences in verification time are negligible on the obtained small abstractions. 

\newcolumntype{N}{>{\centering\arraybackslash}m{2.3cm}}
\newcolumntype{M}{>{\centering\arraybackslash}m{1.8cm}}

\begin{table}[h]
\centering
\caption{Results for reactive clock-synchronization benchmarks. All times are measured in seconds with ``oot'' denoting a timeout at 500 [sec]. The benchmark names include the used parameters, e.g., ``tte-sf-1k'' describes a safe TTEthernet instance with a time discretization of 1000 steps per second.}

\scalebox{0.71}{
\begin{tabular}{|c|c||N|N||N|N||N|}
\hline
Benchmark & No.\ States  & \multicolumn{2}{c||}{nuXmv (IC3)} &
\multicolumn{2}{c||}{nuXmv (BDDs)} & Bisimulation\\
&&  G(safe) &GF(sync)&G(safe)&GF(sync) & Learning
\\\hline
tte-sf-10  & 250   &  0.1 & 0.7 & 0.1 & 0.1& 0.3\tiny$\pm 0.4$ \\
tte-sf-100 & 2500   & 13.3 & 423 & 0.3  & 0.3& 0.7\tiny$\pm 0.6$  \\
tte-sf-1k & \num{2.5e6}   & oot & oot & 1.8  & 31& 1.2\tiny$\pm 0.4$ \\
tte-sf-2k & \num{1e7}   & oot & oot & 6.4  & 162& 1.5\tiny$\pm 0.1$  \\
tte-sf-5k & \num{6.25e7}   & oot & oot & 34  & 417 & 1.6\tiny$\pm 0.4$ \\
tte-sf-10k & \num{2.5e8}   & oot & oot & 193  & oot& 1.6\tiny$\pm 0.2$ \\
\hline
tte-usf-10 & 250  & 0.1 & 0.5 & 0.1 & 0.1 & 0.2\tiny$\pm 0.1$  \\
tte-usf-100 & 2500  & 15.2 & 9.2 & 0.1 & 0.2& 0.2\tiny$\pm 0.1$ \\
tte-usf-1k & \num{2.5e6}   & 421 & 405& 1.5& 10& 0.3\tiny$\pm 0.1$ \\
tte-usf-2k & \num{1e7}   & oot & oot&  7.1  & 41 & 0.4\tiny$\pm 0.2$ \\
tte-usf-5k & \num{6.25e7}    & oot & oot & 32 &  242& 0.5\tiny$\pm 0.5$ \\
tte-usf-10k & \num{2.5e8}  & oot &oot & 130 & oot& 0.6\tiny$\pm 0.5$  \\
\hline
con-sf-10  & 250   &  0.6 & 0.5 & 0.1  & 0.1& 0.4\tiny$\pm 0.2$\\
con-sf-100 & 2500  & 22 & oot & 0.2  & 0.3&0.4\tiny$\pm 0.3$ \\
con-sf-1k & \num{2.5e6}   & oot & oot & 4.6  & 45& 0.7\tiny$\pm 0.4$\\
con-sf-2k & \num{1e7}   & oot & oot & 17.6 & 210& 0.7\tiny$\pm 0.2$ \\
con-sf-5k & \num{6.25e7} & oot & oot & 95 & oot & 0.8\tiny$\pm 0.5$\\
con-sf-10k & \num{2.5e8}   & oot & oot & oot & oot& 1.4\tiny$\pm 1.4$\\
\hline
con-usf-10  & 250  &  0.2 & 0.3 & 0.1  & 0.1 & 0.2\tiny$\pm 0.2$\\
con-usf-100 & 2500  & 31 & 33 & 0.3 & 0.2& 0.2\tiny$\pm 0.1$\\
con-usf-1k & \num{2.5e6}   & oot & oot & 2.8 & 24& 0.3\tiny$\pm 0.2$\\
con-usf-2k & \num{1e7}   & oot & oot & 8.2 & 154 & 0.4\tiny$\pm 0.3$\\
con-usf-5k & \num{6.25e7} & oot & oot& 36  & oot  & 0.7\tiny$\pm 0.4$\\
con-usf-10k & \num{2.5e8} & oot & oot& 156  & oot& 0.8\tiny$\pm 0.3$  \\
\hline
\end{tabular}

}

\label{tab:clock}
\vspace{-8mm}
\end{table}

\subsubsection{Discussion}
The results show that the learned bisimulations effectively and efficiently verify the specifications. Especially with decreased time discretization and, therefore, increased size of the state space, our approach clearly shows an advantageous performance. While a larger reachable state space renders verification harder for all considered approaches, a decreased time discretization is especially difficult for the IC3 toolchain based on bounded model checking, as it increases the completeness threshold and the depth of counterexamples. Since bisimulation learning generalizes from a finite set of samples, it is less susceptible to larger state spaces if the corresponding abstractions remain small. Generally, there is a trade-off between the number of provided initial samples and the number of CEGIS iterations needed to refine the initial partition. Although it may require more time to fit an initial partition on a more extensive set of uniform initial samples, it can reduce the counterexamples needed to obtain a valid stutter-insensitive bisimulation. As our instantiation leverages potentially expensive SMT solving in both the learner and the verifier, which scales in the number of considered samples, we aim at being sample-efficient: therefore, we decided to consider a fixed, small amount of uniform initial samples for all benchmarks of different sizes and leverage the generation of informative counterexamples in potentially more, but faster CEGIS cycles.

\subsection{Conditional Termination}
\subsubsection{Setup}
For software model checking, we consider a range of benchmarks from program termination analysis, including a selection of programs sourced from the termination category of the SV-COMP competition for software verification~\cite{DBLP:conf/tacas/Beyer23}. As is the case for the Euclidean algorithm in Figure~\ref{fig:euclid}, these programs on unbounded integer variables may terminate for some inputs and enter a non-terminating loop for others. The two baseline tools determine whether a program terminates for all possible inputs. Our procedure instead goes a step further by providing an exact partition of the variable valuations, separating the inputs for which the algorithm eventually terminates from those for which it does not. As a distinguishing feature of the baseline benchmarks, we split each program into two versions: one that only allows for inputs for which the program terminates (denoted as ``term'') and another that includes potentially non-terminating inputs (denoted as ``$\neg$term``). 

\subsubsection{Results}
Table~\ref{tab:term} presents the runtime results. Note that we only report the analysis time for the baselines, without additional time spent on parsing or preprocessing the programs. 

\begin{table}[h]
\centering
\caption{Results for software termination benchmarks. All times are measured in seconds, with ``oot'' denoting a timeout at 500 [sec]. A non-conclusive analysis outcome is denoted by ``n/c'' and ``-'' indicates that there is no such special case of the benchmark.} 
\scalebox{0.72}{
\begin{tabular}{|c||M|M||M|M||M|M||M|}
\hline
Benchmark  & \multicolumn{2}{c||}{nuXmv (IC3)} &
\multicolumn{2}{c||}{CPAChecker}&
\multicolumn{2}{c||}{Ultimate} & Bisimulation\\
&  \small term & \small $\neg$term&  \small term & \small $\neg$term&\small term & \small $\neg$term& Learning
\\\hline
term-loop-1    &  oot & oot & 0.6 & 1.8 & 1.5 & 2.9  & 0.2\tiny$\pm 0.1$\\
term-loop-2    &  oot & oot & 0.6 &  0.3 & 0.7 & 0.2  & 0.4\tiny$\pm 0.3$\\
audio-compr & 3.1 & $<0.1$& n/c & n/c  & 0.9 & 0.6 &  0.3\tiny$\pm 0.3$ \\
euclid  & oot & oot &n/c & 0.3& 1.6 & 0.4 &  0.6\tiny$\pm 0.2$\\
greater  & oot & $<0.1$  & 0.6 & 0.3 & 1.1 & 0.3 &  0.4\tiny$\pm 0.2$\\
smaller  &  oot & $< 0.1$ & 0.6 & 0.3 & 1.6 & 0.3 &  0.2\tiny$\pm 0.1$\\
conic  & oot &  $< 0.1$ & 0.7 & 0.4 & n/c & 0.4 & 4.2\tiny$\pm 7.3$ \\
disjunction & oot & - & 34 & - & 1.7 & -  & 0.3\tiny$\pm 0.3$ \\
parallel & n/c & - & 0.9 & - & 9.1 & - & 0.3\tiny$\pm 0.3$ \\
quadratic & $0.2$ & - & n/c & - & n/c & - & 0.3\tiny$\pm 0.1$  \\
cubic & 0.2 & $< 0.1$  & n/c & n/c & n/c & 0.2  & 0.4\tiny$\pm 0.2$ \\
nlr-cond & $< 0.1$&  $< 0.1$ & n/c & n/c & n/c & 0.2  & 0.2\tiny$\pm 0.2$ \\
\hline
\end{tabular}

}

\label{tab:term}
\vspace{-4mm}
\end{table}

\subsubsection{Discussion}

The results show that bisimulation learning, while computing more informative results and solving the more complex problem of conditional termination~\cite{DBLP:conf/tacas/BozgaIK12,DBLP:conf/cav/CookGLRS08}, operates in runtimes comparable to the state-of-the-art tools for the considered benchmarks. Especially for programs that involve disjunctions over variable valuations (cf.\ the \textit{disjunction} and \textit{parallel} benchmarks), our procedure is able to prove termination more efficiently. Additionally, our approach can handle non-linear operations if the employed templates are sufficiently expressive for the corresponding partition and ranking functions. 
As a further surplus, it yields interpretable binary decision trees representing the derived stutter-insensitive bisimulation. These trees are valuable for system diagnostics and fault analysis, providing further insight beyond single counterexamples. Once again, this experimental evaluation shows that, while not being complete in theory, our algorithm terminates in all of the considered experiments. 

\subsubsection{Limitations}

Bisimulation learning addresses a generally undecidable problem: finding finite bisimulations for systems with potentially infinite state spaces~\cite{DBLP:conf/concur/Moller96}. While our procedure is guaranteed to terminate on finite state systems, it must be inherently incomplete in general. Our experimental evaluation demonstrates that we can effectively and efficiently find finite bisimulations for infinite-state systems. However, there exist systems for which bisimulation learning can never successfully terminate. We give an example for such a system: Consider the infinite state space of natural numbers $S = \{0, 1, \dots \}$, where each state transitions by subtracting one, and zero loops on itself, i.e., $T = \{0 \mapsto 0\} \cup \{n \mapsto n - 1, n > 0\}$. The labelling function distinguishes \textit{zero}, \textit{even}, and \textit{odd} numbers (see Figure~\ref{fig:ctx}).

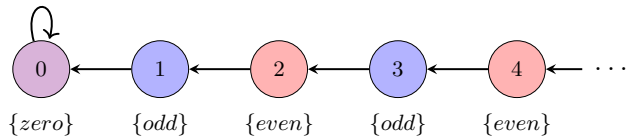
\begin{figure}
\centering

\scalebox{0.93}{
	\begin{tikzpicture} [node distance = 1.7cm, on grid, auto]
 
 \begin{scope}[name prefix =G1, xshift= 0cm]
        \node (q0) [state, initial text = {}, fill = violet!30!white, label={[label distance=.1cm]below:$\{zero\}$}] {$0$};
        \node (q1) [state,initial text = {},right = of q0, fill = blue!30!white, label={[label distance=.1cm]below:$\{odd\}$}] {$1$};
        \node (q2) [state, right = of q1,fill = red!30!white, label={[label distance=.1cm]below:$\{even\}$}] {$2$};
        \node (q3) [state, right = of q2,fill = blue!30!white, label={[label distance=.1cm]below:$\{odd\}$}] {$3$};
        \node (q4) [state, right = of q3,fill = red!30!white, label={[label distance=.1cm]below:$\{even\}$}] {$4$};

        \path [-stealth, thick]
            (q1) edge node {$ $}  (q0)
            (q2) edge node {$ $}  (q1)
            (q3) edge node {$ $}  (q2)
            (q4) edge node {$ $}  (q3)
            (q0) edge [loop above]  node {$ $}();

        \draw[-stealth, thick] ([yshift=0em, xshift=1.6em]q4.east) to[bend left = 0] ([yshift=0em, xshift = 0em]q4.east);

        \draw [] (8.18,0.01) node (dots) {\textbf{\large$\dots$}};
        
 \end{scope}
 
\end{tikzpicture}
}
\caption{A system for which bisimulation learning can never terminate, as no finite stutter-insensitive bisimulation exists.}
\label{fig:ctx}
\vspace{-5mm}
\end{figure}

Any infinite trajectory starting in some state $i$ will eventually enter state zero. However, depending on the starting state, it will traverse a different sequence of even and odd states. Hence, we can construct $\ltln$ formulas that distinguish each state from \textit{smaller} states. For instance, the formula $\lozenge({even} \land \lozenge(\textit{odd} \land \lozenge(\textit{zero})))$ can only be satisfied by states larger than one. Per Theorem~\ref{thm:divsens}, since $\ltln$ can distinguish any state from smaller states, every state must be its own equivalence class with respect to stutter-insensitive bisimulation. When applying bisimulation learning to the described system, the CEGIS loop can never terminate with a finite quotient. Our procedure will keep enlarging the partition template used to fit the growing set of counterexamples, but will never be able to generalize to the entire state space. We note that this is a limitation intrinsic to bisimulations, i.e., no bisimulation algorithm could successfully terminate when applied to the stated system.

\section{Conclusion} 

We have presented the first data-driven method to compute bisimulations. We have demonstrated that our method effectively computes finite abstractions for  model checking and diagnostics. We instantiated our method to stutter-insensitive bisimulations, showcased its efficacy on $\ltln$ model checking of discrete-time synchronization protocols as well as on conditional termination analysis benchmarks from the SV-COMP. On these benchmarks, our method yielded faster results than alternative model checking algorithms based on BDDs and IC3 (nuXmv), and state-of-the-art software model checking procedures (Ultimate and CPAChecker). Our benchmark sets are systems with long completeness thresholds and deep counterexamples, for which stutter-insensitive bisimulations provide succinct abstract quotients. 

Our technique builds upon an existing proof rule for well-founded bisimulations. This allows us to characterise stutter-insensitive bisimulations as classifiers from infinite concrete states to finite abstract states, with an attached ranking function on each abstract state that strictly decreases as the concrete system stutters. This has enabled implementing a learner-verifier framework to compute bisimulations for deterministic systems with discrete state space. Our approach readily extends to \textit{strong} bisimulations for deterministic systems, even though in practice these produce too large abstractions for effective model checking. Stutter-insensitive bisimulations instead are coarser and, therefore, generate smaller quotients. Not only this enables an effective verification of $\ltln$ properties, but also provides succinct and interpretable abstractions. 

Our result is the basis for several extensions. First, we envision extensions towards stutter-insensitive bisimulations for non-deterministic systems, which are harder because they require more general conditions on learner and verifier. Second, we target  extensions towards continuous-state systems, which are harder because they offer much less flexibility in terms of numerical representation~\cite{DBLP:conf/hybrid/Girard07,DBLP:journals/ejcon/GirardP11,DBLP:journals/tac/ZamaniEMAL14}. Lastly, we envision extensions towards using neural architectures for further flexibility and scalability in state classifier representation~\cite{DBLP:conf/nips/AbateEG22}.

\bibliographystyle{splncs04}
\bibliography{bibliography}

\end{document}